\begin{document}

\title{Abstraction Functions as Types}
\subtitle{Modular Verification of Cost and Behavior in Dependent Type Theory}

\begin{abstract}
  Software development depends on the use of libraries whose public specifications inform client code and impose obligations on private implementations; it follows that verification at scale must also be modular, preserving such abstraction.
  Hoare's influential methodology uses \emph{abstraction functions} to demonstrate the coherence between such concrete implementations and their abstract specifications.
  However, the Hoare methodology relies on a conventional separation between implementation and specification, providing no linguistic support for ensuring that this convention is obeyed.

  This paper proposes a synthetic account of Hoare's methodology within univalent dependent type theory by encoding the data of abstraction functions within types themselves.
  This is achieved via a \emph{phase distinction}, which gives rise to a \emph{gluing} construction that renders an abstraction function as a type and a pair of modalities that \emph{fracture} a type into its concrete and abstract parts.
  A \emph{noninterference} theorem governing the phase distinction characterizes the modularity guarantees provided by the theory.

  This approach scales to verification of cost, allowing the analysis of client cost relative to a cost-aware specification.
  A monadic \emph{sealing} effect facilitates modularity of cost, permitting an implementation to be upper-bounded by its specification in cases where private details influence observable cost.
  The resulting theory supports modular development of programs and proofs in a manner that hides private details of no concern to clients while permitting precise specifications of both the cost and behavior of programs.
\end{abstract}

\author{Harrison Grodin}
\orcid{0000-0002-0947-3520}
\email{hgrodin@cs.cmu.edu}

\author{Runming Li}
\orcid{0000-0001-7600-9069}
\email{runmingl@cs.cmu.edu}

\author{Robert Harper}
\orcid{0000-0002-9400-2941}
\email{rwh@cs.cmu.edu}

\affiliation{
  \institution{Carnegie Mellon University}
  \department{Computer Science Department}
  \city{Pittsburgh}
  \state{PA}
  \country{USA}
}

\begin{CCSXML}
<ccs2012>
   <concept>
       <concept_id>10011007.10011006.10011008.10011024.10003202</concept_id>
       <concept_desc>Software and its engineering~Abstract data types</concept_desc>
       <concept_significance>500</concept_significance>
       </concept>
   <concept>
       <concept_id>10011007.10011006.10011008.10011024.10011031</concept_id>
       <concept_desc>Software and its engineering~Modules / packages</concept_desc>
       <concept_significance>500</concept_significance>
       </concept>
   <concept>
       <concept_id>10003752.10003790.10003793</concept_id>
       <concept_desc>Theory of computation~Modal and temporal logics</concept_desc>
       <concept_significance>500</concept_significance>
       </concept>
   <concept>
       <concept_id>10003752.10003790.10011119</concept_id>
       <concept_desc>Theory of computation~Abstraction</concept_desc>
       <concept_significance>100</concept_significance>
       </concept>
   <concept>
       <concept_id>10003752.10003790.10011740</concept_id>
       <concept_desc>Theory of computation~Type theory</concept_desc>
       <concept_significance>300</concept_significance>
       </concept>
   <concept>
       <concept_id>10003752.10003790.10002990</concept_id>
       <concept_desc>Theory of computation~Logic and verification</concept_desc>
       <concept_significance>300</concept_significance>
       </concept>
   <concept>
       <concept_id>10003752.10010124.10010131.10010137</concept_id>
       <concept_desc>Theory of computation~Categorical semantics</concept_desc>
       <concept_significance>100</concept_significance>
       </concept>
   <concept>
       <concept_id>10011007.10011006.10011008.10011009.10011012</concept_id>
       <concept_desc>Software and its engineering~Functional languages</concept_desc>
       <concept_significance>100</concept_significance>
       </concept>
 </ccs2012>
\end{CCSXML}

\ccsdesc[500]{Software and its engineering~Abstract data types}
\ccsdesc[500]{Software and its engineering~Modules / packages}
\ccsdesc[500]{Theory of computation~Modal and temporal logics}
\ccsdesc[100]{Theory of computation~Abstraction}
\ccsdesc[300]{Theory of computation~Type theory}
\ccsdesc[300]{Theory of computation~Logic and verification}
\ccsdesc[100]{Theory of computation~Categorical semantics}
\ccsdesc[100]{Software and its engineering~Functional languages}

\keywords{
  abstract data type,
  abstraction,
  abstraction function,
  algorithm,
  algorithm analysis,
  call-by-push-value,
  cost analysis,
  concrete type,
  data structure,
  dependent type theory,
  equational reasoning,
  information flow,
  modal type theory,
  modularity,
  noninterference,
  phase distinction,
  verification
}

\maketitle

\section{Introduction}\label{sec:introduction}

The single most effective tool in software development is the composition of programs from reusable, replaceable parts.
Developers employ abstract types to describe the assumptions on which clients may rely when using a component, and, correspondingly, the obligations that implementers must provide.
The essence of modularity is that such clients only receive a limited understanding of the real code, thus allowing implementations validating the provided mental model to be freely swapped out for one another without affecting the behavior of the surrounding programs.

The verification of abstract types has classically been achieved through the use of \emph{abstraction functions}, functions that translate a concrete data structure state to a corresponding mathematical model~\citep{hoare>1972}.
Every concrete type is equipped with an abstract model and an abstraction function, and all operations must preserve this abstraction function.
For example, an interface describing first-in-first-out queues might include a type component and the following operations:
\[ \sum_{X : \tpv} (\Label{empty} : X) \times (\Label{enqueue} : E \to X \to X) \times (\Label{dequeue} : X \to E \times X) \]
Mathematically, this interface may be inhabited by the abstract specification
\[ ({X}_{\abs}, \Impl{empty}_{\abs}, \Impl{enqueue}_{\abs}, \Impl{dequeue}_{\abs}) \isdef (\listty{E}, \nilex, \lambda~e~l.\ \catlist{l}{\singex{e}}, \Impl{uncons}) \]
in terms of lists.
However, queues may be implemented otherwise, improving on the efficiency of this na\"ive specification.
Verification that such an implementation $({X}_{\top}, \Impl{empty}_{\top}, \Impl{enqueue}_{\top}, \Impl{dequeue}_{\top})$ is correct relative to the specification requires a function from the implementation type $X_{\top}$ to the specification type ${X}_{\abs}$ that coheres with each implemented operation $\TOP{f}$:
\begin{myquote}{hoare>1972}
  The first requirement for the proof\dots is to define the relationship between the abstract space [${X}_{\abs}$] in which the abstract program [${f}_{\abs}$] is written, and the space [${X}_{\top}$] of the concrete representation.
  This can be accomplished by giving a function [$\alpha$] which maps the concrete variables into the abstract object which they represent\dots
\end{myquote}
First, the type component induces the obligation of an abstraction function $\alpha : X_{\top} \to X_{\abs}$, converting from the implementation type $X_{\top}$ to the specification type $X_{\abs}$ (here, $\listty{E}$).
Then, for each operation, the implementation must match the list specification relative to the function $\alpha$.
{\small
\[\begin{tikzcd}
	1 & {X_{\top}} && {X_{\top}} && {X_{\top}} && {X_{\top}} && {E \times X_{\top}} \\
	1 & {X_{\abs}} && {X_{\abs}} && {X_{\abs}} && {X_{\abs}} && {E \times X_{\abs}}
	\arrow["{\Impl{empty}_{\top}}", from=1-1, to=1-2]
	\arrow[r,-,double equal sign distance,double, from=1-1, to=2-1]
	\arrow["\alpha", from=1-2, to=2-2]
	\arrow["{\Impl{enqueue}_{\top}~e}", from=1-4, to=1-6]
	\arrow["{\alpha}"', from=1-4, to=2-4]
	\arrow["\alpha", from=1-6, to=2-6]
	\arrow["{\Impl{dequeue}_{\top}}", from=1-8, to=1-10]
	\arrow["\alpha"', from=1-8, to=2-8]
	\arrow["E \times \alpha", from=1-10, to=2-10]
	\arrow["{\Impl{empty}_{\abs}}"', from=2-1, to=2-2]
	\arrow["{\Impl{enqueue}_{\abs}~e}"', from=2-4, to=2-6]
	\arrow["{\Impl{dequeue}_{\abs}}"', from=2-8, to=2-10]
\end{tikzcd}\]}%
Such coherence conditions compose horizontally, guaranteeing that client code using the implementation type $X_{\top}$ through the interface only will preserve the required abstraction functions.

Analytically, much work has been done on verifying data structures using abstraction functions.
However, such formalizations provide no intrinsic guarantee of modularity: it is possible to write a function that uses a private property of the queue representation type $X_{\top}$ not revealed by the list representation type $X_{\abs}$.
The onus is on the client to check that its own correctness relies only on publicly-available data by proving the appropriate coherence squares; otherwise, modularity would be violated, as a replacement implementation could not be substituted while guaranteeing that client behavior is unaffected.
To avoid such issues, we equip every concrete type with an abstract model and an abstraction function and every concrete implementation program with a coherent abstract specification program, achieved using the \emph{synthetic} approach of a phase distinction.

\subsection{Synthetic Phase Distinctions}

A synthetic phase distinction ensures that every type $X$ consists of a concrete type $\ALGO{X}$, an abstract type $\BEH{X}$, and an abstraction function $\chi$.
Correspondingly, every function $f : X \to Y$ will contain a concrete implementation $\ALGO{f} : \ALGO{X} \to \ALGO{Y}$, an abstract specification $\BEH{f} : \BEH{X} \to \BEH{Y}$, and a proof of their coherence relative to the bounding abstraction functions.
Phase distinctions first arose in module calculi~\citep{harper-mitchell-moggi>1989}, where a phase isolates compile-time data from run-time data.
By analogy, modularity in dependent type theory requires abstract data, for mathematical specification and reasoning, to be isolated from concrete data, for efficient implementation of programs.

Synthetically, a \emph{phase} is a proposition that, when assumed to be true, isolates a desired aspect of all relevant types and programs~\citep{sterling-harper>2021}.
For the isolation of abstract data from concrete implementation details, we introduce the \emph{abstract phase}, $\abs$, under which private data is hidden.
Every phase induces two idempotent monadic modalities for isolating the in-phase and out-of-phase aspects of a type.
In particular, the \emph{abstract modality} $\Op_{\abs}$ isolates the mathematical model contained within a type, and the \emph{concrete modality} $\Cl_{\abs}$ isolates the implementation type contained within a type.
Moreover, in univalent type theory~\citep{univalentfoundations>2013}, a \emph{gluing} construction builds a type from a function from a concrete type to an abstract type~\citep[\S 3.4]{rijke-shulman-spitters>2020} that we will treat as an abstraction function.

\subsection{Modular Verification of Cost}

The proposed synthetic viewpoint on abstraction supports modular specification not only of behavior, but also of cost.
Using Calf~\citep{niu-sterling-grodin-harper>2022}, a dependent type theory that reifies cost as an effect, programs are instrumented with cost annotations that reflect an abstract notion of resource usage in a program.
This allows the abstract aspect of a program to propagate cost information to client code, facilitating the modular verification of complexity as has been a longstanding goal:
\begin{myquote}{stepanov>1995}
  It was commonly assumed that the complexity of an operation is part of implementation and that abstraction ignores complexity\dots
  Complexity\dots
  is a part of the unwritten contract between the module and its user\dots
  You cannot have interchangeable modules unless these modules share similar complexity behavior\dots
  Complexity assertions have to be part of the interface.
\end{myquote}
The inclusion of cost guarantees in interfaces is essential in the modular design of algorithms and data structures, allowing downstream clients to use a simplified model of the cost characteristics of libraries when verifying downstream code~\citep{acar-blelloch>2022}.

In supporting cost in interfaces, the ability to make statements about program behavior only is lost.
To counteract this effect, the behavioral phase $\beh$ of Calf is incorporated, suppressing cost annotations to isolate behavior.
In tandem the abstract and behavioral phases smoothly transition from implementation, to cost specification, to behavior specification.

Finally, cost specifications should be upper bounds, rather than exact requirements, on implementations.
Not only do looser specifications allow for greater flexibility across implementations, but often, the true cost of a program depends on concrete details hidden in the abstract phase, mandating that merely an upper bound be satisfied to preserve abstraction.
Building on the Decalf extension for cost reasoning with inequality~\citep{grodin-niu-sterling-harper>2024}, a sealing-style effect is developed, allowing a concrete algorithm to be hidden under an abstract upper bound it purports to meet.

\subsection{Contributions}\label{sec:contributions}

This work applies the modal framework of \citet{rijke-shulman-spitters>2020} to the problem of modular programming and verification in \cref{sec:abstraction}, using the proposed abstract phase to build into each type both a concrete and an abstract part, connected by an abstraction function.
The noninterference theorem associated with this phase mediates the boundary between components in \cref{sec:interface}, allowing concrete library implementations to be chosen arbitrarily without impacting downstream code or verifications.
In \cref{sec:cost} the cost effect and behavioral phase of Calf are integrated to refine the type theory further, providing both cost-aware and cost-free interfaces; finally, combining the abstract phase with the inequational reasoning of Decalf, a monadic sealing effect is introduced to facilitate modular cost verification using upper bounds.

\section{Abstraction Functions, Synthetically}\label{sec:abstraction}

To fracture dependent type theory into concrete and abstract aspects, only a single proposition $\abs$, the \emph{abstract phase}, must be added as a primitive.
Because this phase is at most inhabited, the variable name $\b : \abs$ is used by convention to assume its truth.

\subsection{The Abstract Phase}

We first recall fundamental definitions and results from \citet{rijke-shulman-spitters>2020}.
Associated with the phase $\abs$ are a pair of idempotent monadic modalities, the abstract modality $\Op_{\abs}$ and the concrete modality $\Cl_{\abs}$, that will isolate the abstract and concrete aspects of a given type, respectively.%
\footnote{\citet{rijke-shulman-spitters>2020} refer to $\Op_{\abs}$ and $\Cl_{\abs}$ as the open and closed modalities determined by $\abs$, respectively.}

\begin{definition}\label{def:open-modality}
  The \emph{abstract modality} $\Op_{\abs} X \isdef \abs \to X$ is the reader monad for the type $\abs$.
  A type $X$ is \emph{abstract} when the monad unit $\etaOp_X : X \to \Op_{\abs} X$ is an equivalence%
  \footnote{To show that a function $f : X \to Y$ is an \emph{equivalence}, it suffices to show that it is an isomorphism by giving an inverse function $f^{-1} : Y \to X$ where both $f^{-1} \circ f = \text{id}_{X}$ and $f \circ f^{-1} = \text{id}_{Y}$~\citep{univalentfoundations>2013}.};
  an inhabitant of an abstract type is called a \emph{specification}.
  For convenience, let $\eqOp{x}{x'}$ be a shorthand for $\Op_{\abs}(x = x')$.
\end{definition}

\begin{definition}\label{def:closed-modality}
  The \emph{concrete modality} $\Cl_{\abs} X$ marks a type as irrelevant in the abstract phase: crucially, given $\b : \abs$, we have $\Cl_{\abs} X \iso \unit$.
  It is defined as the following quotient-inductive type (a pushout), including injections $\etaCl_X$ for $X$ and $\starCl$ for $\abs$ that are identified in the phase:
  \begin{center}
    \begin{minipage}{0.5\linewidth}
      \begin{center}
      \iblock{
        \mhang{\kw{data}~\Cl_{\abs}~(X : \tpv) : \tpv~\kw{where}}{
          \mrow{\Label{\etaCl_X} : X \to \Cl_{\abs} X}
          \mrow{\Label{\starCl} : \abs \to \Cl_{\abs} X}
          \mrow{\Label{\_} : (x : X)~(\b : \abs) \to \Label{\etaCl_X} x = \Label{\starCl~\b}}
        }
      }
      \end{center}
    \end{minipage}%
    \begin{minipage}{0.5\linewidth}
      \[\begin{tikzcd} {\abs \times X} & \abs \\
        X & {\Cl_{\abs} X}
        \arrow["{\Label{proj}_2}"', from=1-1, to=2-1]
        \arrow["{\Label{proj}_1}", from=1-1, to=1-2]
        \arrow["{\starCl}", from=1-2, to=2-2]
        \arrow["{\etaCl_X}"', from=2-1, to=2-2]
        \arrow["\lrcorner"{anchor=center, pos=0.125, rotate=180}, draw=none, from=2-2, to=1-1]
      \end{tikzcd}\]
    \end{minipage}%
  \end{center}
  The quotient must be respected when pattern matching on this modality: when casing on concrete data of type $\Cl_{\abs} X$, both the $\etaCl_X$ and $\starCl$ cases must agree (in the abstract phase, because $\starCl$ may only be constructed assuming $\abs$ is inhabited).
  A type $X$ is \emph{concrete} when the monad unit $\etaCl_X : X \to \Cl_{\abs} X$ is an equivalence; an inhabitant of a concrete type is an \emph{implementation}.
\end{definition}

The concrete modality indicates that data is present for implementation purposes only, to be ignored abstractly.
Concrete types can be characterized in terms of the abstract modality:

\begin{lemma}\label{lem:closed-modal}
  A type $X$ is concrete exactly when $\Op_{\abs} X$ is contractible (equivalent to $1$).
\end{lemma}

\subsubsection{Semantics}\label{sec:semantics}

There are many possible semantics for such a phased language, differing centrally in their interpretation of the phase proposition, $\abs$.

\begin{model}\label{mod:true}
  The phase $\abs$ may be interpreted as true, causing $\Op_{\abs} X = X$ and $\Cl_{\abs} X = 1$.
\end{model}

\begin{model}\label{mod:false}
  The phase $\abs$ may be interpreted as false, causing $\Op_{\abs} X = 1$ and $\Cl_{\abs} X = X$.
\end{model}

The in-phase semantics extracts the abstract mathematical specification, and the out-of-phase semantics extracts the concrete underlying code; both ignore abstraction functions.
In a third semantics the proposition $\abs$ is an intermediate ``included middle'' proposition (in a constructive setting) that is neither true nor false; this model interprets every type as its abstraction function.

\begin{model}\label{mod:psh}
  Using Kripke semantics with worlds as phases ordered by entailment,\footnote{That is, presheaves on the poset $\vmathbb{2} \isdef \{ \abs \vdash \top \}$ of propositions~\cite[\S 5]{niu-sterling-grodin-harper>2022}.}
  each type $X$ is interpreted as
  an out-of-phase component $\interp{X}_\top$, an in-phase component $\interp{X}_{\abs}$, and an abstraction function $\interp{X}_\top \to \interp{X}_{\abs}$.
  For example:
  \begin{align*}
    \interp{X} &\isdef \Presheaf{\interp{X}_\top}{\interp{X}_{\abs}} &
    \interp{\nat} &\isdef \Presheaf[\text{id}]{\nat}{\nat} &
    \interp{\abs} &\isdef \Presheaf{0}{1} = \Yo(\abs)
  \end{align*}
  Standard base types, such as $\nat$, are interpreted with identical concrete and abstract components and the identity abstraction function (that is, as constant presheaves).
  The abstract phase is defined to have an empty concrete component and a trivial abstract component (that is, as the Yoneda embedding of $\abs : \vmathbb{2}$), combining \cref{mod:true,mod:false}.
  The modalities are interpreted as follows:
  \begin{align*}
    \interp{\Op_{\abs} X} &= \interp{\abs \to X} = \Presheaf[\text{id}]{\interp{X}_{\abs}}{\interp{X}_{\abs}} &
    \interp{\Cl_{\abs} X} &= \Presheaf{\interp{X}_\top}{1}
  \end{align*}
  The abstract modality treats abstract data as \emph{both} specification and implementation, using the identity function as an abstraction function.
  The concrete modality erases abstract data, replacing the abstract component with the singleton type $1$ and inducing the unique abstraction function into the singleton.
  Notice that a type $X$ is abstract when it is interpreted with $\interp{X}_\top = \interp{X}_{\abs}$ and the identity abstraction function, and it is concrete when $\interp{X}_{\abs} = 1$.
\end{model}

\subsubsection{Fracture and Gluing}

An abstract and a concrete type may be combined to form a single type: given a concrete type $\ALGO{X}$ and an abstract type $\BEH{X}$, we wish to construct a type $X$ with $\Cl_{\abs} X = \ALGO{X}$ and $\Op_{\abs} X = \BEH{X}$.
The product $\ALGO{X} \times \BEH{X}$ already satisfies the second criterion, since abstractly $\ALGO{X} = 1$, but it does not satisfy the first: abstract types themselves have nontrivial concrete aspects, so $\Cl_{\abs} \BEH{X}$ does not trivialize.
This may be resolved by restricting the product $\ALGO{X} \times \BEH{X}$ such that the components concretely cohere according to a phase-sensitive analogue of an abstraction function.

\begin{definition}
  A \emph{phased abstraction function} from $\ALGO{X}$ to $\BEH{X}$ is a function $\chi : \ALGO{X} \to \Cl_{\abs} \BEH{X}$.
\end{definition}

The modality on the output lifts $\BEH{X}$ to the level of concrete code, so a phased abstraction function is a function between concrete types, trivialized in the abstract phase.%
\footnote{Without the concrete modality on the output, a function $\ALGO{X} \to \BEH{X}$ would necessarily be constant, per \cref{thm:noninterference}.}
Now, a concrete type $\ALGO{X}$, an abstract type $\BEH{X}$, and a phased abstraction function $\chi$ may be \emph{glued} together:%
\footnote{This presents the universe $\tpv$ as a synthetic Artin gluing~\citep[\S 3.4]{rijke-shulman-spitters>2020}.}
\begin{center}
  \begin{minipage}{0.6\textwidth}
    \begin{align*}
      & \mathsf{Glue} : \left( \sum_{\ALGO{X} : \ALGO{\tpv}} \sum_{\BEH{X} : \BEH{\tpv}} (\ALGO{X} \to \Cl_{\abs} \BEH{X}) \right) \to \tpv \\
      & \Glue{\BEH{X}}{\ALGO{X}}{\chi} \isdef \sum_{(\ALGO{x}, \BEH{x}) : \ALGO{X} \times \BEH{X}} \chi \ALGO{x} =_{\Cl_{\abs} \BEH{X}} \etaCl_{\BEH{X}} \BEH{x}
    \end{align*}
  \end{minipage}%
  \begin{minipage}{0.4\textwidth}
    \[\begin{tikzcd} \Glue{\BEH{X}}{\ALGO{X}}{\chi} & \ALGO{X} \\
      \BEH{X} & {\Cl_{\abs} \BEH{X}}
      \arrow["\Label{proj}_2"', from=1-1, to=2-1]
      \arrow["\Label{proj}_1", from=1-1, to=1-2]
      \arrow["{\chi}", from=1-2, to=2-2]
      \arrow["{\etaCl_{\BEH{X}}}"', from=2-1, to=2-2]
      \arrow["\lrcorner"{anchor=center, pos=0.125}, draw=none, from=1-1, to=2-2]
    \end{tikzcd}\]
  \end{minipage}%
\end{center}
Here, $\ALGO{\tpv}$ and $\BEH{\tpv}$ are the sub-universes of concrete and abstract types, respectively.
Gluing $\ALGO{X}$ to $\BEH{X}$ along $\chi$ restricts the product $\ALGO{X} \times \BEH{X}$ to only include pairs $(\ALGO{x}, \BEH{x})$ that concretely cohere up to the phased abstraction function $\chi$, with $\chi \ALGO{x} = \etaCl \BEH{x}$.
In fact, every type $X$ contains a phased abstraction function $\Cl_{\abs} \etaOp_X : \Cl_{\abs} X \to \Cl_{\abs} \Op_{\abs} X$ from $\Cl_{\abs} X$ to $\Op_{\abs} X$, which can be extracted via a map in the reverse direction called \emph{fracturing}:
\begin{align*}
  &\mathsf{Frac} : \tpv \to \left( \sum_{\ALGO{X} : \ALGO{\tpv}} \sum_{\BEH{X} : \BEH{\tpv}} (\ALGO{X} \to \Cl_{\abs} \BEH{X}) \right) \\
  &\Frac{X} \isdef (\Cl_{\abs} X, \Op_{\abs} X, \Cl_{\abs} X \xrightarrow{\Cl_{\abs} \etaOp_X} \Cl_{\abs} \Op_{\abs} X)
\end{align*}
This fracturing construction is inverse to gluing.

\begin{samepage}
\begin{theorem}[Fracture and Gluing]\label{thm:fracture}
  $\mathsf{Frac}$ and $\mathsf{Glue}$ form an equivalence~\citep[\S 3.4]{rijke-shulman-spitters>2020}.
\end{theorem}
The round-trip condition $\mathsf{Glue}(\Frac{X}) = X$ on $\tpv$ says that every type $X : \tpv$ can be rendered as a gluing of its fractured components $\Cl_{\abs} X$ and $\Op_{\abs} X$:
\begin{center}
  \begin{minipage}{0.6\linewidth}
    \[
      X = \left( \sum_{(\ALGO{x}, \BEH{x}) : \Cl_{\abs} X \times \Op_{\abs} X} (\Cl_{\abs} \etaOp_X) \ALGO{x} =_{\Cl_{\abs} \Op_{\abs} X} \etaCl_{\Op_{\abs} X} \BEH{x} \right)
    \]
  \end{minipage}%
  \begin{minipage}{0.4\linewidth}
    \[\begin{tikzcd} X & \Cl_{\abs} X \\
      \Op_{\abs} X & {\Cl_{\abs} \Op_{\abs} X}
      \arrow["{\etaOp_X}"', from=1-1, to=2-1]
      \arrow["{\etaCl_X}", from=1-1, to=1-2]
      \arrow["{\Cl_{\abs} \etaOp_X}", from=1-2, to=2-2]
      \arrow["{\etaCl_{\Op_{\abs} X}}"', from=2-1, to=2-2]
      \arrow["\lrcorner"{anchor=center, pos=0.125}, draw=none, from=1-1, to=2-2]
    \end{tikzcd}\]
  \end{minipage}%
\end{center}
The other round-trip condition ensures $\Cl_{\abs} \Glue{\BEH{X}}{\ALGO{X}}{\chi} = \ALGO{X}$ and $\Op_{\abs} \Glue{\BEH{X}}{\ALGO{X}}{\chi} = \BEH{X}$.
\end{samepage}

\subsection{Abstraction Functions Explicitly, via Gluing}\label{sec:abstraction:gluing}

Gluing may be used to build a type out of any abstraction function, including analytic verifications using abstraction functions into the synthetic setting.
We now consider two case studies of concrete types implementing the list abstract type, $\BEH{X} \isdef \Op_{\abs} (\listty{E})$.

\subsubsection{Batched Queue}\label{sec:abstraction:gluing:queue}

To implement queues within a functional setting with efficient amortized cost, a pair of lists may be used: incoming data is enqueued to the ``inbox'' list, and outgoing data is usually dequeued from the ``outbox'' list, unless it is empty, in which case the ``inbox'' data is moved to the ``outbox'' as a batch~\citep{hood-melville>1981,burton>1982,gries>1989,okasaki>1999}.
Let $E : \tpv$ be the type of elements to store, equipped with a default element to dequeue from an empty queue for simplicity.
To implement a queue in this way we may use gluing, selecting the concrete representation type $\ALGO{X}$ as pairs of lists and a phased abstraction function $\chi : \ALGO{X} \to \Cl_{\abs} \BEH{X}$:
\begin{align*}
  \ALGO{X} &\isdef \Cl_{\abs} (\listty{E} \times \listty{E}) &
  \BEH{X} &\isdef \Op_{\abs} (\listty{E}) &
  \chi &\isdef \Cl_{\abs} (\etaOp \circ \Impl{revAppend})
\end{align*}
where $\Impl{revAppend}~(l_1, l_2) \isdef \catlist{l_2}{\rev{l_1}}$. Encoded within the type $X \isdef \Glue{\BEH{X}}{\ALGO{X}}{\chi}$ itself is the idea that every pair of lists can be transformed to its abstract list representation by appending the reversed inbox list to the outbox. The goal is to give an implementation of the following queue signature whose representation type $X$ is $\Glue{\BEH{X}}{\ALGO{X}}{\chi}$:
\[ \Name{PreQueue} \isdef \sum_{X : \tpv} (\Label{empty} : X) \times (\Label{enqueue} : E \to X \to X) \times (\Label{dequeue} : X \to E \times X). \]

Writing code involving the type $X$ involves writing an abstract specification-level program on $\BEH{X}$, a concrete implementation on $\ALGO{X}$, and a proof that they cohere up to $\chi$.
For example, let
\begin{align*}
  \TOP{\Impl{empty}} &: \listty{E} \times \listty{E} &
  \ABS{\Impl{empty}} &: \listty{E} \\
  \TOP{\Impl{empty}} &\isdef (\nilex, \nilex) &
  \ABS{\Impl{empty}} &\isdef \nilex
\end{align*}
define the standard concrete and abstract representations of the empty queue.
Classically, we would prove that $\TOP{\Impl{empty}}$ accurately implements $\ABS{\Impl{empty}}$ by proving that the following square commutes:
\begin{center}
  \begin{minipage}{0.5\linewidth}
    \begin{align*}
      \Impl{revAppend}~\TOP{\Impl{empty}} = \ABS{\Impl{empty}}
    \end{align*}
  \end{minipage}%
  \begin{minipage}{0.5\linewidth}
\[\begin{tikzcd}
	1 & {\listty{E} \times \listty{E}} \\
	1 & {\listty{E}}
	\arrow["{\TOP{\Impl{empty}}}", from=1-1, to=1-2]
	\arrow[r,-,double equal sign distance,double, from=1-1, to=2-1]
	\arrow["\Impl{revAppend}", from=1-2, to=2-2]
	\arrow["{\ABS{\Impl{empty}}}"', from=2-1, to=2-2]
\end{tikzcd}\]
  \end{minipage}%
\end{center}
To define the empty queue of glue type $X$ using the language of phases, we include all of this data:
\begin{align*}
  &\Impl{empty} : X \\
  &\Impl{empty} \isdef (\etaCl\TOP{\Impl{empty}}, \etaOp \ABS{\Impl{empty}}, \eqrefl_{\Cl_{\abs} \BEH{X}})
\end{align*}
The abstract specification is wrapped with $\etaOp$, the concrete implementation is wrapped with $\etaCl$, and the proof $\eqrefl_{\Cl_{\abs} \BEH{X}}$ is the fact that these data cohere up to $\chi$, a phased analogue of the commutative square shown above.
(For readability, we will henceforth omit proofs of coherence by $\chi$ in writing.)

Implementation of the enqueue operation is similar, but because it takes an element of $X$ as input, we use the functorial action of the modalities. %
Given concrete and abstract programs,
\begin{align*}
  &\TOP{\Impl{enqueue}} : E \to (\listty{E} \times \listty{E}) \to (\listty{E} \times \listty{E}) &
  &\ABS{\Impl{enqueue}} : E \to \listty{E} \to \listty{E} \\
  &\TOP{\Impl{enqueue}}~e~(l_1, l_2) \isdef (\consex{e}{l_1}, l_2) &
  &\ABS{\Impl{enqueue}}~e~l \isdef \catlist{l}{\singex{e}},
\end{align*}
and a proof $\Impl{enqueue-correct}$ that they preserve $\Impl{revAppend}$,
\begin{center}
  \begin{minipage}{0.55\linewidth}
    \begin{align*}
      \Impl{revAppend} \circ \TOP{\Impl{enqueue}}~e = \ABS{\Impl{enqueue}}~e \circ \Impl{revAppend}
    \end{align*}
  \end{minipage}%
  \begin{minipage}{0.45\linewidth}
    \[\begin{tikzcd}
    	{\listty{E} \times \listty{E}} & {\listty{E} \times \listty{E}} \\
    	{\listty{E}} & {\listty{E}}
    	\arrow["{\TOP{\Impl{enqueue}}~e}", from=1-1, to=1-2]
    	\arrow["{\Impl{revAppend}}"', from=1-1, to=2-1]
    	\arrow["{\Impl{revAppend}}", from=1-2, to=2-2]
    	\arrow["{\ABS{\Impl{enqueue}}~e}"', from=2-1, to=2-2]
    \end{tikzcd},\]
  \end{minipage}%
\end{center}
we can implement the enqueue operation on $X$ using the action of the modalities $\Cl_{\abs}$ and $\Op_{\abs}$,
\begin{align*}
  &\Impl{enqueue} : E \to X \to X \\
  &\Impl{enqueue}~e \isdef \Cl_{\abs} (\TOP{\Impl{enqueue}}~e) \times \Op_{\abs} (\ABS{\Impl{enqueue}}~e).
\end{align*}
The proof obligation induced by the gluing construction that the concrete and abstract parts of the enqueue operation cohere under the abstraction function $\chi$ is stated as follows:
\begin{align*}
  (\ALGO{x} : \ALGO{X})~(\BEH{x} : \BEH{X})
  \to \chi(\ALGO{x}) = \etaCl(\BEH{x})
  \to \chi(\Cl_{\abs}(\TOP{\Impl{enqueue}}~e)~\ALGO{x}) = \etaCl(\Op_{\abs}(\ABS{\Impl{enqueue}}~e)~\BEH{x})
\end{align*}
This follows from $\Impl{enqueue-correct}$, the above coherence square.
In this manner an analytic, phase-insensitive codebase such as $\TOP{\Impl{enqueue}}$, $\ABS{\Impl{enqueue}}$, and $\Impl{enqueue-correct}$ can be systematically and mechanically converted into synthetic, phase-sensitive code, such as $\Impl{enqueue}$.
Finally, we combine
\begin{align*}
  &\TOP{\Impl{dequeue}} : (\listty{E} \times \listty{E}) \to E \times (\listty{E} \times \listty{E}) &
  &\ABS{\Impl{dequeue}} : \listty{E} \to E \times \listty{E}
\end{align*}
to define $\Impl{dequeue} : X \to E \times X$.
While the second component is analogous to $\Impl{enqueue}$, the first component (of type $E$) requires additional care.
Using the action of the modalities, the function
\begin{align*}
  \Cl_{\abs}(\mathsf{proj}_1 \circ \TOP{\Impl{dequeue}}) \times \Op_{\abs}(\mathsf{proj}_1 \circ \ABS{\Impl{dequeue}}) &: X \to \Glue{\Op_{\abs}E}{\Cl_{\abs}E}{\Cl_{\abs}\etaOp_E}
\end{align*}
does not map into $E$, but rather its fractured-and-glued form.
To remedy this, this function is post-composed with the equivalence $\mathsf{Glue}(\Frac{E}) \to E$ induced by \cref{thm:fracture}.

\begin{remark}[Semantics]
  The various semantics isolate different aspects of the type $X$.
  \begin{enumerate}
    \item
      In \cref{mod:true} the representation type $X$ is simply $\listty{E}$, retaining the abstract requirement for the implementation type although obliterating the pair-of-lists split.
    \item
      In \cref{mod:false} the representation type $X$ is $\listty{E} \times \listty{E}$, recovering the usual functional implementation of batched queues, because the single-list representation is now hidden under an impossible assumption.
    \item
      In \cref{mod:psh} the type $X$ maintains both the ``concrete'' batched and ``abstract'' list representations, as well as the abstraction function $\Impl{revAppend}$, all within a single glue type:
      \begin{align*}
        \interp{X} \isdef \interp{\Glue{\Op_{\abs}(\listty{E})}{\Cl_{\abs}(\listty{E} \times \listty{E})}{\chi}}
        = \Presheaf[\Impl{revAppend}]{\listty{E} \times \listty{E}}{\listty{E}}
      \end{align*}
      Functions will be interpreted as squares, witnessing the coherence between the concrete implementation and the abstract specification.\qedhere
  \end{enumerate}
\end{remark}

\subsubsection{Red-Black Tree}

In addition to queues the list abstract data type can represent finite, ordered \emph{sequences} of data~\cite{blelloch>1992}.
For efficiency of access to arbitrary elements, this abstract type can be implemented using a concrete type of tree equipped with some additional data to maintain approximate balance, such as a red-black tree \cite{guibas-sedgewick>1978,okasaki>1999}.%
\footnote{Commonly, red-black trees store data at the internal $\Label{red}$ and $\Label{black}$ nodes. However, we choose to store data at leaves, structuring the tree like the free monoid on $E$ because we abstractly view red-black trees as sequences.}
Following \citet{weirich>2014}, we may define an indexed inductive type to enforce the red-black invariants, with indices for tree color and black-height, as shown in \cref{fig:rbt}.
\begin{figure}
  \iblock{
    \mhang{\kw{data}~\Type{IRBTree}~ (E : \tpv) : (c : \colorty) \to (n : \nat) \to \tpv~\kw{where}}{
      \mrow{\Label{leaf} : E \to \irbtty{\black}{\zero}{E}}
      \mrow{\Label{empty} : \irbtty{\black}{\zero}{E}}
      \mrow{\Label{red} : \irbtty{\black}{n}{E} \to \irbtty{\black}{n}{E} \to \irbtty{\red}{n}{E}}
      \mrow{\Label{black} : \irbtty{c_1}{n}{E} \to \irbtty{c_2}{n}{E} \to \irbtty{\black}{(\suc{n})}{E}}
    }
    \mrow{}
    \mrow{\rbtty{E} \isdef \sum_{\{c : \colorty\}} \sum_{\{n : \nat\}} \irbtty{c}{n}{E}}
  }
  \caption{Type representing invariant-preserving red-black trees.}\label{fig:rbt}
  \Description{Purely textual code figure.}
\end{figure}

\begin{remark}[Representation Invariants]
  In simply-typed languages invariants on concrete types are left implicit: as long as the implementer of library code preserves the invariants, clients must as well, because they may only use operations provided by the interface.
  In dependent type theory types may encode the invariants explicitly; for example, the type $\rbtty{E}$ represents only valid \emph{red-black} trees.
  By this property alone, all functions defined on $\rbtty{E}$ must respect the red-black invariants, even prior to the preservation of abstraction functions afforded by the abstract phase.
\end{remark}

Now, as in the previous example, we may use gluing to define the type of red-black trees representing lists,
\[ X \isdef \Glue{\Op_{\abs}(\listty{E})}{\Cl_{\abs}(\rbtty{E})}{\Cl_{\abs}(\etaOp \circ \Impl{elements})}, \]
where $\Impl{elements} : \rbtty{E} \to \listty{E}$ computes the in-order traversal of a red-black tree.
As before, we can define an operation on $X$ by giving an abstract specification and a concrete implementation that cohere up to this $\Impl{elements}$ abstraction function.

\begin{remark}\label{rem:phase-restrictions}
  Gluing with respect to the abstract phase, we commit to an abstract model of red-black trees, $\Op_{\abs}(\listty{E})$.
  This enforces that all functions out of $X$ must be portable to the abstract phase as operations on $\listty{E}$, analogous to Kripke's possible world semantics \cite{kripke>1963} where the abstract phase is a ``future world''.
  For example, it is impossible to write a program
  $\Impl{isRed} : X \to \boolty$
  that computes whether the root node of a tree is red, because there is not enough data in the list representation to determine whether the root of a corresponding tree (that is, a tree in the preimage of $\Impl{elements}$) was red, indicated by the dotted arrow in the left diagram:
  \begin{center}
  \begin{minipage}{0.5\linewidth}
\[\begin{tikzcd}
	{\rbtty{E}} & \boolty \\
	{\listty{E}} & \boolty
	\arrow["{\TOP{\Impl{isRed}}}", from=1-1, to=1-2]
	\arrow["{\Impl{elements}}"', from=1-1, to=2-1]
	\arrow[r,-,double equal sign distance,double, from=1-2, to=2-2]
	\arrow[dotted, from=2-1, to=2-2]
\end{tikzcd}\]
  \end{minipage}%
  \begin{minipage}{0.5\linewidth}
\[\begin{tikzcd}
	{\rbtty{E}} & \boolty \\
	{\listty{E}} & 1
	\arrow["{\TOP{\Impl{isRed}}}", from=1-1, to=1-2]
	\arrow["{\Impl{elements}}"', from=1-1, to=2-1]
	\arrow[from=1-2, to=2-2]
	\arrow[from=2-1, to=2-2]
\end{tikzcd}\]
  \end{minipage}%
  \end{center}
  However, such a function could be successfully implemented with type
  $\Impl{isRed} : X \to \Cl_{\abs} \boolty$ as in the right diagram,
  computing this data \emph{concretely} without revealing it abstractly.
  This is because the abstract aspect of the concrete $\Cl_{\abs} \boolty$ is trivial, so the abstract specification
  is trivial.
\end{remark}

\subsection{Abstraction Functions Implicitly, via Phased Quotients}\label{sec:abstraction:quotient}

In the previous subsection we used gluing to construct types with a desired abstract aspect, and maps between such types consisted of a concrete implementation, an abstract specification, and a commutative square, adapted to the phased setting.
Although this approach is technically always applicable by \cref{thm:fracture}, storing both representations side-by-side is not always ergonomic: we are forced to write the specification and implementation as separate pieces of code, even though they may share substantial content.
Moreover, verification of their coherence is \emph{global}: the abstraction function converts the concrete type to the abstract type, and we must show that operations (typically defined by recursion) cohere (typically proved by induction).

\begin{remark}
  A similar issue of ergonomics occurs in module calculi~\cite{harper-mitchell-moggi>1989}: it is inconvenient for a programmer to ``physically'' separate types and terms in modules, referred to as a \emph{phase separation}.
  Instead, it is preferable to intermix types and terms and isolate the static type components, using a \emph{phase distinction} as in the module calculus of \citet{sterling-harper>2021}.
\end{remark}

Through the use of inductive types with an in-phase quotient, we may make verification \emph{local}, indicating which components of a concrete type should be collapsed to achieve the intended abstract type.%
\footnote{Although we are working in a univalent setting, the types we consider will all be set-truncated.
Thus, when we define a higher inductive type (in this work, only quotient types), we omit explicit set-truncation for brevity.}
This blends the abstract type into the concrete type more artfully, leaving the abstraction function implicit due to the use of a synthetic phase distinction.

\subsubsection{Batched Queue}\label{sec:abstraction:quotient:queue}

To locally quotient the pair-of-lists type used to implement batched queues, we allow elements to move freely between the end of the inbox list $l_1$ to the end of the outbox list $l_2$, depicted in \cref{fig:batch-quotient}.
This technique is a phased refinement of the batched queue type of \citet[\S 4.2]{angiuli-cavallo-mortberg-zeuner>2021}, providing the $\Label{tilt}$ quotient only in the abstract phase: although it makes sense to shift elements between the inbox and outbox when reasoning abstractly about correctness, such movement concretely would defeat the purpose of having an efficient batched queue implementation.
In this manner we retain the reasoning benefits of \citet{angiuli-cavallo-mortberg-zeuner>2021} in the abstract phase while gaining the ability to extract quotient-free concrete code using the out-of-phase \cref{mod:false}.
\begin{figure}
  \iblock{
    \mhang{\kw{data}~\batchty{(E : \tpv)} : \tpv~\kw{where}}{
      \mrow{\Label{inj} : \listty{E} \times \listty{E} \to \batchty{E}}
      \mrow{\Label{tilt} : \abs \to (e : E)~(l_1~l_2 : \listty{E}) \to \Label{inj}~(\catlist{l_1}{\singex{e}}, l_2) = \Label{inj}~(l_1, \catlist{l_2}{\singex{e}})}
    }
  }
  \caption{Type representing batched queues, instrumented with an abstract-phase quotient to allow data to flow implicitly from the inbox list to the outbox list.}\label{fig:batch-quotient}
  \Description{Purely textual code figure.}
\end{figure}

Implementing operations on this type is straightforward.
For example, we may define the empty queue and the enqueue operation as follows (omitting the dequeue operation for brevity):
\begin{align*}
  &\Impl{empty} : \batchty{E} &
  &\Impl{enqueue} : E \to \batchty{E} \to \batchty{E} \\
  &\Impl{empty} \isdef \Label{inj}~(\nilex, \nilex) &
  &\Impl{enqueue}~e~(\Label{inj}~(l_1, l_2)) \isdef \Label{inj}~(\consex{e}{l_1}, l_2) \\
  &&
  &\Impl{enqueue}~e~(\Label{tilt}~\b~e'~l_1~l_2) \isdef \Label{tilt}~\b~e'~(\consex{e}{l_1})~l_2
\end{align*}
These implementations precisely follow the concrete implementations $\TOP{\Impl{empty}}$ and $\TOP{\Impl{enqueue}}$ from \cref{sec:abstraction:gluing:queue}, with the minor addition of the $\Label{tilt}$ case in $\Impl{enqueue}$ to justify that adding $e$ to the beginning of $l_1$ respects the abstract-phase quotient.
However, they implicitly incorporate both the abstract specifications $\ABS{\Impl{empty}}$ and $\ABS{\Impl{enqueue}}$ and the proof of coherence.
The lifted function \[ \Impl{revAppend} : \batchty{E} \to \listty{E} \] is an abstract equivalence,
witnessing the fact that $\batchty{E}$ has the intended abstract semantics.

\subsubsection{Red-Black Tree}\label{sec:abstraction:quotient:rbt}

With additional phase-sensitivity, this technique may be applied to the construction of red-black trees, as well.
Within a red-black tree (\cref{fig:rbt}), alongside the abstractly-relevant sequence of elements is (1) a tree shape and (2) a red/black node coloring, both stored for efficiency of the concrete implementation.
Thus, we may use a phased quotient to
\begin{enumerate}
  \item add associativity and identity laws to identify trees up to rotation and
  \item add a recoloring law to identify the $\Label{black}$ and $\Label{red}$ node constructors,
\end{enumerate}
deleting the concrete data in the abstract phase.
However, with indices of type $\colorty$ and $\nat$, na\"ively grafting on these laws is not even well-typed:
performing arbitrary tree rotations and recolorings need not lead to another invariant-satisfying red-black tree!
To evade the red-black invariants, the key maneuver is the placement of the color and black-height invariants under the concrete modality $\Cl_{\abs}$: that way, in the abstract phase, we are no longer obliged to maintain the red-black invariants.
This revised quotient inductive type is shown in \cref{fig:rbt-quotient}.
\begin{figure}
  \iblock{
    \mhang{\kw{data}~\Name{IRBTree}~ (E : \tpv) : (c : \Cl_{\abs}\colorty) \to (n : \Cl_{\abs}\nat) \to \tpv ~\kw{where}}{
      \mrow{\Label{empty} : \irbtty{(\etaCl \black)}{(\etaCl \zero)}{E}}
      \mrow{\Label{leaf} : E \to \irbtty{(\etaCl \black)}{(\etaCl \zero)}{E}}
      \mrow{\Label{red} : \irbtty{(\etaCl \black)}{n}{E} \to \irbtty{(\etaCl \black)}{n}{E} \to \irbtty{(\etaCl \red)}{n}{E}}
      \mrow{\Label{black} : \irbtty{c_1}{n}{E} \to \irbtty{c_2}{n}{E} \to \irbtty{(\etaCl \black)}{((\Cl_{\abs}\suc*{})~n)}{E}}
      \mrow{\Label{recolor} : \abs \to (t_1~t_2 : \irbtty{\starCl}{\starCl}{E}) \to \Label{red}~t_1~t_2 = \Label{black}~t_1~t_2}
      \mrow{\Label{id}^\Label{l} : \abs \to (t : \irbtty{\starCl}{\starCl}{E}) \to \Label{black}~t~\Label{empty} = t}
      \mrow{\Label{id}^\Label{r} : \abs \to (t : \irbtty{\starCl}{\starCl}{E}) \to \Label{black}~\Label{empty}~t = t}
      \mrow{\Label{assoc} : \abs \to (t_1~t_2~t_3 : \irbtty{\starCl}{\starCl}{E}) \to \Label{black}~(\Label{black}~t_1~t_2)~t_3 = \Label{black}~t_1~(\Label{black}~t_2~t_3)}
    }
    \mrow{}
    \mrow{\rbtty{E} \isdef \sum_{c : \Cl_{\abs}\colorty} \sum_{n : \Cl_{\abs}\nat} \irbtty{c}{n}{E}}
  }
  \caption{Type representing invariant-preserving red-black trees, instrumented with the concrete modality and abstract quotients to  annihilate red-black coloring and tree shape in the abstract phase.}\label{fig:rbt-quotient}
  \Description{Purely textual code figure.}
\end{figure}
As in the batched queue example, we must now equip programs with proofs that they respect the local perturbations allowed by the phased quotient.
For example, the append operation that combines two red-black trees in an order-preserving and invariant-maintaining manner~\citep{blelloch-ferizovic-sun>2016,blelloch-ferizovic-sun>2022,li-grodin-harper>2023}
\[ \Impl{append} : \rbtty{E} \to \rbtty{E} \to \rbtty{E} \]
just performs rotations and recolorings, which respects the given quotients.

\begin{remark}[Smart Constructors]\label{rem:smart-constructor}
  For the verification of $\Impl{append}$, the essential lemma is that
  \[ \eqOp{\Impl{append}}{\Label{black}}. \]
  In other words: from an abstract perspective, $\Impl{append}$ is simply the $\Label{black}$ node constructor.
  For this reason, we justify the terminology that $\Impl{append}$ is a smart constructor, informally defined to be a constructor that performs some additional computation for the sake of concrete efficiency only.
  We may treat this observation as a formal definition using the abstract phase: a \emph{smart constructor} is a program that is abstractly equivalent to a constructor.
\end{remark}

\begin{remark}\label{rem:semantics-quotient}
  Because the invariants and quotients are both relative to the abstract phase, we may extract different aspects of the type using various semantics from \cref{sec:semantics}.
  For example,
  \begin{enumerate}
    \item
      in \cref{mod:true} the indices are erased, and the quotients always apply, recovering the mathematical free monoid (equivalent to $\listty{E}$) and specification code; and
    \item
      in \cref{mod:false} the concrete modality on the indices becomes a no-op, and the quotient equations are erased, recovering the standard red-black tree type (\cref{fig:rbt}) and code.
  \end{enumerate}
  The phase mediates between these semantics: code must respect the possibility of collapsing to lists under the phase, but this may uniformly be deleted in the semantics to recover the true code.
\end{remark}
\section{Phased Interfaces for Modularity}\label{sec:interface}

Through the use of the abstract phase, every type consists of both concrete and abstract aspects. When a client uses an abstract data type, it may only be concerned with the abstract aspects when verifying its own abstract properties, remaining agnostic to their concrete implementation. In this sense the abstract aspect of a type acts as an \emph{interface} on which a client may rely.

In a typical module system maintaining this separation is crucial: the client must not depend on implementation details of the library in order to preserve abstraction and modularity.
This section describes how to restrict the abstract aspect of types and programs to give modular interfaces representing algorithms and data structures, and formalize the idea of modularity through a critical \emph{noninterference} property of the phase distinction.

\subsection{Abstract Data Types}\label{sec:interface:adt}

Traditionally, an abstract data type is informally understood to be a mathematical model of a type and some operations on the type.
\begin{myquote}{liskov-zilles>1974}
  An \emph{abstract data type} defines a class of abstract objects which is completely characterized by the operations available on those objects\dots
  When a programmer makes use of an abstract data object, he is concerned only with the behavior which that object exhibits but not with any details of how that behavior is achieved by means of an implementation.
\end{myquote}
To describe an abstract data type, it is essential to describe the behavior a client can expect from its data structure implementations, including the possible abstract states a client can envision the type to contain and the abstract mathematical behavior of each given operation in terms of these states.

Continuing our running example, let us attempt to define an abstract data type of queues.
We recall from \cref{sec:abstraction:gluing:queue} the queue interface type, which includes a type $X$ and empty, enqueue, and dequeue operations of the usual types,
\[ \Name{PreQueue} \isdef \sum_{X : \tpv} (\Label{empty} : X) \times (\Label{enqueue} : E \to X \to X) \times (\Label{dequeue} : X \to E \times X). \]
A reader might intuit from the included label names---$\Label{empty}$, $\Label{enqueue}$, and $\Label{dequeue}$---that this interface describes an abstract data type that contains elements of type $E$.
Which abstract data type is being described, though---stacks, queues, sets, priority queues, or something else?
These labels of course carry no technical meaning, which allows implementations of $\Name{PreQueue}$ that behave nothing like queues, and it provides insufficient information to clients for verification.
The abstract guarantees made about inhabitants can be restricted using an \emph{abstract specification type}.

\begin{definition}\label{def:specification-type}
  The \emph{$\varphi$-phase specification type} for specification $\BEH{x} : \Op_{\varphi} X$ is the type
  \[ \SPEC[\varphi]{X}{\BEH{x}(\b)} \isdef \sum_{x : X} \Op_{\varphi}(x = \BEH{x}(\b)), \]
  describing all the inhabitants $x : X$ that cohere with $\BEH{x}$ in the phase $\varphi$.%
  \footnote{We use a notation inspired by extension types~\citep{riehl-shulman>2017,sterling-harper>2021}. However, we use typal equality, because the proof that a program matches its specification need not be definitional.}
\end{definition}

Using an abstract specification type with $\Impl{listQueue} \isdef (\listty{E}, \ABS{empty}, \ABS{enqueue}, \ABS{dequeue})$,
we may build on $\Name{PreQueue}$ to define abstract data type of queues,
\[ \Name{Queue} \isdef \SPEC[\abs]{\Name{PreQueue}}{\Impl{listQueue}}, \]
restricting to inhabitants that are in agreement with the list-level operations given in \cref{sec:abstraction:gluing:queue} under the abstract phase. Immediately, the specification $\Impl{listQueue}$ is an inhabitant of $\Name{Queue}$.
The batched representations of \cref{sec:abstraction:gluing:queue,sec:abstraction:quotient:queue} also implement this specification, as their types and terms abstractly match $\Impl{listQueue}$.

\subsection{Noninterference and Modularity}\label{sec:interface:noninterference}

It has long been understood that the behavior of code dependent on abstract data types should only depend on the behavior guaranteed by the ADT.
\begin{myquote}{hoare>1972}
  The correctness of the final concrete program depends only on the correctness of the original abstract program.
\end{myquote}
Moreover, modularity follows from a noninterference principle, which guarantees that abstractly, concrete implementation details have no impact on verification.
\begin{myquote}{dijkstra>1965}
  [Modularity] relies on something less outspoken, viz. on what I should like to call "The principle of non-interference": \dots
  the correct working of the whole can be established by taking, of the parts, into account their exterior specification only, and not the particulars of their interior construction.
\end{myquote}
Using the synthetic phase distinction between concrete/interior and abstract/exterior data, these informal principles are realized as theorems.
First, we observe that the abstract specification type is a \emph{concrete} type, having room for concrete variation but completely fixing the abstract specification.

\begin{lemma}\label{lem:ext-is-closed-modal}
  The abstract specification type $\SPEC[\abs]{X}{\BEH{x}(\b)}$ is concrete.
\end{lemma}

Next, we present a dependent variation of the \emph{noninterference} principle for phases by~\citet[Theorem 2.5]{niu-sterling-grodin-harper>2022} that guarantees that concrete implementation details have no effect on the abstract behavior of downstream client code.

\begin{theorem}[Noninterference]\label{thm:noninterference}
  Let $X$ be a concrete type, and let $Y : X \to \tpv$ be a family of types.
  Then, in the phase, the function space $\dto{x}{X} Y(x)$ is equivalent to $\sum_{x : X} Y(x)$.
\end{theorem}
\begin{proof}
  This follows from \cref{lem:closed-modal}, which says that in the phase, $X$ is contractible.
\end{proof}

Treating $X$ as the type given to a library and $Y$ as the type of downstream code, we have that downstream code dependent on the library of type $\dto{x}{X}{Y(x)}$ has no abstractly-visible preference for any particular $X$, as it is equivalent to any choice of library $x : X$ in the abstract phase.
In other words noninterference guarantees that private, concrete data does not affect public, abstract data, except insofar as the concrete data is partially revealed in the abstract phase.
Therefore, we may freely substitute one library implementation for another without changing the abstract behavior of the client.
This property may be formalized as a corollary about \emph{modularity}.
\begin{corollary}[Modularity]\label{cor:modularity}
  Let $X$ be a concrete type, and let $Y$ be an arbitrary type.
  Then, for all $f : X \to Y$ and $x, x' : X$, we have that $\eqOp{f(x)}{f(x')}$.
\end{corollary}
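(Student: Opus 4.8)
The plan is to derive this directly from the characterization of algorithmic types in \cref{lem:algorithmic}(1). Note first that the goal $\eqOp{f(x)}{f(x')}$ unfolds to $\prod_{\b : \beh} f(x) = f(x')$, so I would begin by assuming a witness $\b : \beh$ and aim to construct an honest identification $f(x) = f(x')$ under that hypothesis.

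Under the hypothesis $\b : \beh$, the proposition $\beh$ is inhabited and hence contractible, so the behavioral modality becomes trivial: $\Op Z \isdef \beh \to Z$ is equivalent to $Z$ for every type $Z$. Applying this to $X$ together with \cref{lem:algorithmic}(1)---which says $\Op X$ is contractible because $X$ is algorithmic---we learn that $X$ itself is contractible in the presence of $\b$. In particular $x =_X x'$, and applying $\mathrm{ap}\,f$ yields $f(x) =_Y f(x')$. Discharging the assumption $\b$ gives $\eqOp{f(x)}{f(x')}$.

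Equivalently, and without passing through ``$X$ is contractible,'' one can argue purely modally: \cref{lem:algorithmic}(1) provides a path between the two constant functions $\lambda\_.\,x,\ \lambda\_.\,x' : \Op X$, which by function extensionality \eqref{eq:eq-pi} is exactly an element of $\prod_{\b:\beh} x = x'$, i.e.\ $\eqOp{x}{x'}$; post-composing with $\mathrm{ap}\,f$ (functoriality of $\Op$) sends this to $\eqOp{f(x)}{f(x')}$. A third route reads the corollary straight off \cref{thm:noninterference}: the behavioral equivalence $\Op(X \to Y) \iso \Op Y$ is implemented by evaluation at the center of contraction $x_0 : \Op X$, so behaviorally every $f$ agrees with the constant function $\lambda\_.\,f(x_0)$, forcing $f(x) = f(x_0) = f(x')$.

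There is no substantial obstacle here: the corollary is essentially immediate from \cref{lem:algorithmic}(1) (or from \cref{thm:noninterference}). The only point that deserves care is the passage from the statement ``$\Op X$ is contractible'' to a behavioral identification of the two \emph{specific} points $x$ and $x'$---one should be explicit either that assuming $\b : \beh$ trivializes the modality, or that function extensionality turns a path in $\beh \to X$ into a family of paths in $X$, rather than glossing this as ``$X$ is behaviorally contractible.''
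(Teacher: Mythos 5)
Your proof is correct, and your third route (reading the statement off \cref{thm:noninterference} via evaluation at the behavioral center $x_0$) is precisely how the paper treats it---it is stated as a corollary of noninterference with no separate proof given. Your primary route is a mild but genuine variation: rather than going through the noninterference equivalence $\Op(X \to Y) \iso \Op Y$, you derive the result directly from \cref{lem:algorithmic}(1) by assuming $\b : \beh$, observing that under this hypothesis $\Op$ trivializes so that $\Op X$ contractible gives $X$ contractible, and then applying $\mathrm{ap}\, f$ to the resulting path $x = x'$. This is a bit more elementary---it shows the corollary needs only the characterization of algorithmic types, not the full function-space statement of noninterference---at the cost of not exposing noninterference as the organizing principle, which is the conceptual point the paper is trying to emphasize. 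Your closing caution is well taken: one does need either to pass explicitly to the context where $\b$ is assumed (so $\Op$ is the identity modality) or to apply function extensionality \eqref{eq:eq-pi} to turn a path in $\Op X$ into a family of paths in $X$; the slogan ``$X$ is behaviorally contractible'' alone does not directly hand you a term of $\eqOp{x}{x'}$ for specific $x$, $x'$.
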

In other words: for an implementation $f$ of $Y$ depending on a library of type $X$, we may freely swap any $x : X$ for any $x' : X$ and guarantee identical abstract behavior.
We now capitalize on this principle to verify client programs of the queue abstract data type.

\begin{samepage}
\begin{example}
  Consider the following program demonstrating a simple usage pattern for a queue, enqueueing an element to an empty queue and immediately dequeueing:
  \iblock{
    \mrow{\Impl{demo} : \Name{Queue} \to E \to E}
    \mrow{\Impl{demo}~q~e \isdef (q\proj{dequeue}~(q\proj{enqueue}~e~q\proj{empty}))\proj{proj}_1}
  }
  To verify the abstract behavior of $\Impl{demo}$, we show that $\eqOp{\Impl{demo}~q~e}{e}$ for all queue implementations $q : \Name{Queue}$ and elements $e : E$.
  By \cref{cor:modularity} (since $\Name{Queue}$ is concrete), we may choose a convenient implementation for verification, such as $\Impl{listQueue}$, without loss of generality.
  Then, the abstract equivalence between $\Impl{demo}~\Impl{listQueue}~e$ and $e$ holds judgmentally.
\end{example}
\end{samepage}

\begin{figure}
  \begin{minipage}{0.48\linewidth}
    \iblock{
      \mrow{\Impl{fromList} : \dto{q}{\Name{Queue}} \listty{E} \to q.X}
      \mrow{\Impl{fromList}~q~\nilex \isdef q\proj{empty}}
      \mrow{\Impl{fromList}~q~(\consex{e}{es}) \isdef q\proj{enqueue}~e~(\Impl{fromList}~q~es)}
    }
  \end{minipage}%
  \begin{minipage}{0.52\linewidth}
    \iblock{
      \mrow{\Impl{toList} : \dto{q}{\Name{Queue}} \nat \to q.X \to \listty{E}}
      \mrow{\Impl{toList}~q~\zero~x \isdef \nilex}
      \mhang{\Impl{toList}~q~(\suc{k})~x \isdef}{
        \mrow{\letex{q\proj{dequeue}~x}{(e, x')}}
        \mrow{\consex{e}{\Impl{toList}~q~k~x'}}
      }
    }
  \end{minipage}

  \bigskip
  \begin{minipage}{0.48\linewidth}
    \iblock{
      \mrow{\Impl{qreverse} : \Name{Queue} \to \listty{E} \to \listty{E}}
      \mrow{\Impl{qreverse}~q~l \isdef \Impl{toList}~q~\len{l}~(\Impl{fromList}~q~l)}
    }
  \end{minipage}%
  \begin{minipage}{0.52\linewidth}
    \iblock{
      \mrow{\Impl{reverse} : \listty{E} \to \listty{E}}
      \mrow{\Impl{reverse}~\nilex \isdef \nilex}
      \mrow{\Impl{reverse}~(\consex{e}{es}) \isdef \catlist{\Impl{reverse}~es}{\singex{e}}}
    }
  \end{minipage}
  \caption{List reverse implemented using a queue, $\Impl{qreverse}$, and a direct list reversal function, $\Impl{reverse}$.}\label{fig:queue-rev}
  \Description{Purely textual code figure.}
\end{figure}

\begin{example}
  Another simple usage of queues is reversing a list by enqueueing its elements and dequeueing them in reverse order, implemented in $\Impl{qreverse}$ in \cref{fig:queue-rev}.
  We may verify that the abstract behavior of $\Impl{qreverse}$ matches the specification, $\Impl{reverse}$: we show that $\eqOp{\Impl{qreverse}~q}{\Impl{reverse}}$ for all queue implementations $q : \Name{Queue}$.
  Again by \cref{cor:modularity}, we may choose a convenient implementation for verification.
  Choosing $\Impl{listQueue}$ for $q$ in $\Impl{fromList}$ and $\Impl{toList}$, it is immediate that $\Impl{fromList}$ is exactly the $\Impl{reverse}$ function, and $\Impl{toList}$ is exactly the identity function. Thus, the abstract behavior of $\Impl{qreverse}~q$ is equivalent to $\Impl{reverse}$, for all possible implementations $q : \Name{Queue}$.
\end{example}

\begin{remark}[Synthetic parametricity]\label{rem:relational-parametricity}
  In their presentation of batched queues \citet[\S 4.1]{sterling-harper>2021} similarly provide a conjoined implementation of a list queue (here, $\Impl{listQueue}$) alongside a batched queue (here, $\Impl{batchedQueue}$) connected by a (functional) relation, $\Impl{revAppend}$.
  Using a pair of \emph{symmetric} (``left'' and ``right'') phases, either the list queue or the batched queue may be isolated by entering the appropriate phase.
  Instead, we emphasize here the functional, \emph{asymmetric} nature of the relation (given as $\chi$): we only allow a coercion of our conjoined implementation to the privileged specification $\Impl{listQueue}$, chosen as the canonical meaning of ``queue'', via the abstract phase.
  We recover a theorem analogous to their representation independence result \cite[Theorem 4.1]{sterling-harper>2021} via modularity (\cref{cor:modularity}): for all functions $f : \Name{Queue} \to \boolty$, choosing $x$ and $x'$ to be $\Impl{listQueue}$ and $\Impl{batchedQueue}$ respectively, we have that $\eqOp{f(\Impl{listQueue})}{f(\Impl{batchedQueue})}$.
\end{remark}

\begin{remark}[Security]
  Modularity and the abstract phase may be viewed through the lens of security, where $\abs$ corresponds to a public, low-security environment and $\top$ corresponds to a private, high-security environment~\citep{sterling-harper>2022}.
  By default, we operate in the private environment, capable of writing secret implementation details pertaining to cost and clever implementation.
  When we switch to the public environment, though, implementation details are redacted.
  It is in this sense that the abstract phase guarantees a notion of abstraction and modularity: private data is guaranteed to be redacted in the public phase and may therefore be swapped with any other private data at will with no effect.
\end{remark}

\subsection{Phased Universal Properties}

Modular reasoning via noninterference is available for any concrete type, not just those syntactically constructed as specification types.
Commonly, abstract data types classify free/inductive algebraic structures; for example, finite ordered sequences are classified by the free monoid, finite multisets are classified by the free commutative monoid, and finite sets are classified by the free semilattice.
In this section, using red-black trees implementing finite ordered sequences as an example, we form a concrete type that classifies data structures implementing a free algebraic structure.

\subsubsection{Abstract Properties}\label{sec:interface:adt:properties}

Sequences are typically viewed as a monoid, but this statement has a delicate interaction with abstraction.
For example, the $\Impl{append}$ operation on red-black trees (\cref{sec:abstraction:quotient:rbt}) is not associative: appending trees in different orders will typically result in distinct shapes and colorings of red-black trees.
However, it is \emph{abstractly} associative: in the abstract phase, it is the case by \cref{rem:smart-constructor} that $\Impl{append}$ is simply the $\Label{black}$ constructor, and the $\Label{assoc}$ law allows rotation of black node constructors.
This pattern is generally true of data structures: the imagined properties satisfied are only realized \emph{abstractly}, as concrete data that is not reflected abstractly need not respect the given properties.
Although a monoid will consist of the same ``stuff'' and ``structure'' as usual, the ``properties'' (written as $\Name{IsMonoid}$) are only stated abstractly:
\[ \OMonoid \isdef \sum_{X : \tpv} (\Label{empty} : X) \times (\Label{append} : X \to X \to X) \times \Op_{\abs}(\Name{IsMonoid}(\Label{empty}, \Label{append})) \]
Abstractly, this definition coheres with the standard mathematical definition of a monoid: entering the abstract phase recovers the classical notion.

\subsubsection{Abstract Universal Properties}\label{sec:interface:adt:up}
To develop sequences as the free monoid, abstract monoids are equipped with generators of type $E : \tpv$, serving to create a singleton sequence:
\[ \OMonoidOn{E} \isdef \sum_{M : \OMonoid} (\Label{singleton} : E \to M.X) \]
The universal property of the free monoid is a function $\Label{mapreduce}$ that maps into every other abstract monoid on $E$.
Typically, this map would be a homomorphism, but just as the append operation for red-black trees is not truly associative, $\Label{mapreduce}$ need not truly preserve the $\OMonoidOn{E}$ structure.
Instead, we will ask that $\Label{mapreduce}$ only \emph{abstractly} preserve structure.
\begin{definition}
  Let $M, M' : \OMonoidOn{E}$.
  An \emph{abstract homomorphism} from $M$ to $M'$ consists of a function $f : M.X \to M'.X$ that abstractly preserves the operations: %
  \begin{align*}
    \eqOp*{f (M\proj{empty})}{M'\proj{empty}} \\
    \eqOp*{f (M\proj{append}~x_1~x_2)}{M'\proj{append}~(f~x_1)~(f~x_2)} \\
    \eqOp*{f (M\proj{singleton}~e)}{M'\proj{singleton}~e}
  \end{align*}
  Write the type of abstract homomorphisms from $M$ to $M'$ as $\Hom[\circ]{M}{M'}$.
\end{definition}

Using abstract homomorphisms, the concrete type of sequences may be defined as the abstractly-initial $\OMonoidOn{E}$:
\[ \Sequence{E} \isdef \sum_{M : (\OMonoidOn{E})} \prod_{M' : (\OMonoidOn{E})} \sum_{f : \Hom[\circ]{M}{M'}} \prod_{f' : \Hom[\circ]{M}{M'}} \eqOp{f}{f'}. \]
In other words a sequence consists of $M : \OMonoidOn{E}$ alongside a universal $\Label{mapreduce}$ function (taking $M'$ as input).
In the abstract phase this type restricts to the usual mathematical definition of a free monoid on $E$, which is unique, causing the type $\Sequence{E}$ to be concrete by \cref{lem:closed-modal}.
This reconciles within a phased language the idea that sequences have many implementations: although the free monoid is abstractly unique, concretely there is room for variation.

\subsubsection{Abstract Refinements}

Although the universal property of sequences provides the facility to implement any algorithm, there is no guarantee of efficiency.
The type $\Sequence{E}$ may be refined with extra data, exporting additional operations with known behavior but better efficiency, using the fact that concrete types are closed under dependent sum~\citep[Example 1.8]{rijke-shulman-spitters>2020}.

\begin{lemma}\label{lem:sigma-closed-modal}
  If $X : \tpv$ and $Y : X \to \tpv$ are concrete, then $\sum_{x : X} Y(x)$ is concrete.
\end{lemma}

Letting $X \isdef \Sequence{E}$, which is concrete by \cref{lem:ext-is-closed-modal}, we may define concrete type families $Y$ that equip a sequence with additional data.

\begin{example}
  For any sequence implementation $M : \Sequence{E}$, the length of a given sequence can be computed using $\Label{mapreduce}$, the second projection of $M$:
  \[ \Impl{length} \isdef M\proj{mapreduce}~(\nat, 0, +, +\text{-}0\text{-}\Impl{isMonoid}, \Impl{const}~1) \]
  However, for many implementations of the sequence abstraction (including red-black trees), this operation will take linear time.
  We may refine sequences with an additional primitive operation that must behaviorally cohere with the above implementation:
  \[ \Name{SequenceExt}(E) \isdef \sum_{M : \Sequence{E}} (\Label{length} : \SPEC[\abs]{M.X \to \nat}{\Impl{length}}) \]
  Even though this $\Label{length}$ function must abstractly cohere with $\Impl{length}$ (and could always be implemented as exactly $\Impl{length}$ above), it may also be implemented using a more efficient method, such as storing the length alongside the sequence for constant-time computation.
  Notice that by \cref{lem:ext-is-closed-modal}, the type $\SPEC[\abs]{M.X \to \nat}{\Impl{length}}$ is concrete for every $M$.
  Therefore, by \cref{lem:sigma-closed-modal}, $\Name{SequenceExt}(E)$ is concrete as well.
\end{example}

This strategy of exporting abstractly-redundant information in an abstract data type is pervasive, as the algorithms able to be implemented directly via the induction principle are rarely the most efficient.
For example, the queue abstract data type can be thought of as a refinement of lists with optimized operations for appending elements to the end and removing elements from the beginning, and priority queues can be thought of as refining finite multisets with an optimized operation for removing the least element according to some ordering.
\section{Cost Refinements}\label{sec:cost}

Thus far, the abstract phase distinction has separated concrete implementation details, used for improved efficiency, from abstract specifications.
Now, following Calf~\citep{niu-sterling-grodin-harper>2022}, cost can be reified in programs as an effect.
Accordingly, the abstract phase $\abs$ bifurcates, providing the compatible ability to erase cost information from programs.

\subsection{Dependent Type Theory with Cost}\label{sec:cost:cost}

To support effectful programs in dependent type theory, we transition to Calf~\citep{niu-sterling-grodin-harper>2022}, a dependent type theory that distinguishes between running computations and inert values in both terms and types.
Calf is a dependent variation of call-by-push-value~\cite{levy>2003}, inspired by \citet{ahman-ghani-plotkin>2016}, \citet{vakar>thesis}, and \citet{pedrot-tabareau>2019}.
\begin{align*}
  \text{Val.} \quad X,Y,Z &\coloncolonequals \U{A} \mid \cdots \\
  \text{Comp.} \quad A,B,C &\coloncolonequals \F{X} \mid \dpto{x}{X} A(x)
\end{align*}
The value types extend dependent type theory (including the universe of value types $\tpv$) with the suspensions $\U{A}$ of call-by-push-value, and the computation types include effectful computations of value types $\F{X}$ and dependent products (functions) from a value type $X$ to a family of computation types $A(x)$.
We use common notations for programs involving these types, such as pattern matching on $\ret{x}$ (the introduction form for the $\F{X}$ type) as the elimination form for $\F{X}$.
Following Calf~\citep{niu-sterling-grodin-harper>2022}, we also use the ``less bureaucratic'' form of call-by-push-value in which the introduction and elimination forms for the $\U{A}$ type are left implicit, and we silently convert between the equivalent types $\U{\dpto{x}{X} A(x)}$ and $\dto{x}{X} \U{A(x)}$.

In Calf cost is treated abstractly as an effect: we write $\charge[A]{c}{a}$ for a print-like effect that records $c : \costty$ units of abstract cost before running the computation $a : A$, where $\costty$ is a value type representing cost equipped with a monoid structure $(0, +)$.%
\footnote{We alter the notation $\mathsf{step}^c(-)$ from previous developments in Calf, emphasizing that $c$ is counting an abstract notion of cost, not the number of evaluation steps.
This distinction is of particular relevance here, where we use implausible-looking annotations for abstract reasoning in the abstract phase.}
\begin{mathpar}
  \inferrule
    {
      \Gamma \vdash c : \costty \\
      \Gamma \vdash a : A
    }
    {\Gamma \vdash \charge[A]{c}{a} : A}
\end{mathpar}
The cost effect must respect the monoid structure $(\costty, 0, +)$:
\begin{subequations}
  \begin{align}
    \charge{0}{a} &= a \label{eq:zero}\\
    \charge{c_1}{\charge{c_2}{a}} &= \charge{c_1 + c_2}{a} \label{eq:plus}
  \end{align}
\end{subequations}
For the examples here, we implicitly assume a monoid homomorphism $(\nat,0,+) \to (\costty,0,+)$, treating costs as numbers that are combined additively.

As is standard in call-by-push-value, computation types are semantically interpreted as algebras over a monad on $\tpv$; here, we use the writer monad, $\costty \times (-)$~\citep{levy>2003,niu-sterling-grodin-harper>2022}.

\subsection{Abstract Cost Specifications as Concrete Types}

In the presence of the cost effect the role of concrete types (such as abstract specification types) evolves: the inhabitants of a concrete type still collapse to a single representative in the abstract phase, but now, this includes cost.
Before, through the use of an abstract specification type, we could restrict attention to programs that matched a fictitious mental model; now, this mental model must also include cost annotations.

\begin{principle}\label{principle}
  Although we maintain an intended cost model on general programs, we intentionally allow arbitrary cost annotations on programs in the abstract phase, however implausible they may seem on the surface.
  This is because abstract-phase cost annotations will act as costs for reasoning---not execution---purposes, akin to the numbers used when solving cost recurrence relations traditionally.
\end{principle}

\begin{example}\label{ex:log-linear-sort}
  Recall the type $\rbtty{E}$ of red-black trees that appear as lists in the abstract phase (\cref{fig:rbt-quotient}).
  The type of $n \lg n$-cost sorting algorithms on such red-black trees is given (up to the abstract equivalence between $\rbtty{\nat}$ and $\listty{\nat}$) by the abstract specification type
  \[ \SPEC[\abs]{\U{\rbtty{\nat} \pto \F{\rbtty{\nat}}}}{\lambda l.\ \charge{\len{l}\lg\len{l}}{\ret{\ImplSpec{sort}~l}}}, \]
  where $\ImplSpec{sort} : \listty{\nat} \to \listty{\nat}$ is the unique sorting function.
  In the abstract phase we have that $\rbtty{\nat}$ is equivalent to $\listty{\nat}$, which is coherent (up to equivalence) with the specification $\lambda l.\ \charge{\len{l}\lg\len{l}}{\ret{\ImplSpec{sort}~l}}$ having type $\U{\listty{\nat} \pto \F{\listty{\nat}}}$.
  This type classifies all effectful programs on red-black trees that, when viewed in the abstract phase, have log-linear cost and behave like a sorting algorithm.
  By design, the log-linear cost need not be realistic to $\ImplSpec{sort}$, as the specification is given in the abstract phase.
\end{example}

As in \cref{sec:interface:noninterference}, using an abstract specification type has modularity benefits: when verifying client code in the abstract phase, we may freely substitute the given specification for any implementation.
By adding cost as an effect, our notion of interface relative to the abstract phase has become cost-sensitive, conveying an intended cost and behavior to clients by restricting implementations accordingly.
In doing so, though, we have lost the ability to describe properties that are cost-independent.
For example, when \cref{sec:interface} is na\"ively adapted to the effectful setting of Calf,
\begin{itemize}
  \item the $\Name{Queue}$ type now describes queues with zero cost;
  \item the $\OMonoid$ type classifies append functions that are associative including cost (excluding essentially all intended inhabitants, such as red-black trees); and
  \item the $\Name{Sequence}$ type classifies sequence implementations with zero cost, to satisfy initiality.
\end{itemize}
To recover the intended meaning of these types, describing properties that are oblivious to cost, we bifurcate the prior uses of the abstract phase into two phases: the abstract phase $\abs$ for isolating the abstract semantics of a program, and the behavioral phase $\beh$ for isolating the behavior of a program by erasing cost annotations.

\subsection{Behavioral Phase Distinction}

To isolate the behavioral semantics of programs, we add the synthetic phase distinction of Calf~\cite{niu-sterling-grodin-harper>2022} that ignores all cost profiling using the charge effect.
Specifically, the \emph{(abstract) behavioral phase}%
\footnote{We use the ``behavioral/algorithmic'' terminology in place of the ``extensional/intensional'' terminology from Calf~\cite{niu-sterling-grodin-harper>2022}, avoiding confusion with the unrelated ideas of extensional/intensional type theory and mathematical extensionality principles while emphasizing the abstraction viewpoint of this work. Moreover, both phases $\abs$ and $\beh$ can be thought of as isolating the extension of a program, including and excluding cost, respectively.}
is another proposition $\beh : \tpv$ that, when inhabited, erases details relevant only for efficiency, leaving behind only the behavior for analysis of correctness.
Because efficiency-oriented implementation details are erased by the abstract phase $\abs$ already, we posit that the behavioral phase $\beh$ is a stronger assumption than $\abs$.

\begin{axiom}\label{ax:beh-abs}
  The behavioral phase $\beh$ implies the abstract phase $\abs$.
\end{axiom}

\begin{remark}
  Given both phases, \cref{mod:psh} can be extended to phase poset $\{ \beh \vdash \abs \vdash \top \}$.
\end{remark}

The behavioral phase induces modalities $\Op_{\beh}$ and $\Cl_{\beh}$ analogous to those derived from the abstract phase~\citep{niu-sterling-grodin-harper>2022}.

\begin{terminology}
  We refer to the modality $\Op_{\beh}$ as the \emph{behavioral modality}, which imposes the phase $\beh$ to erase efficiency-oriented details.
  A type is \emph{behavioral} when $\etaOp_X : X \to \Op_{\beh} X$ is an equivalence, and an inhabitant of a behavioral type is a \emph{behavior}.
\end{terminology}

\begin{terminology}
  We refer to the modality $\Cl_{\beh}$ as the \emph{algorithmic modality}, which marks data for erasure when cost is to be ignored.
  A type is \emph{algorithmic} when $\etaCl_X : X \to \Cl_{\beh} X$ is an equivalence, and an inhabitant of an algorithmic type is an \emph{algorithm}.
\end{terminology}
Algorithmic types will adopt many of the responsibilities of concrete types, describing classes of algorithms that all share a single behavior.
Although an algorithmic type may have many inhabitants, they must all implement the same behavior, rendering the type behaviorally trivial.

\begin{lemma}
  Every behavioral type is abstract, and every concrete type is algorithmic.
\end{lemma}

Beyond the abstract phase, the behavioral phase must erase cost annotations.
To accomplish this, we assume that the type of costs $\costty$ is algorithmic: this allows the cost effect to be deleted in the behavioral phase, supporting reasoning about correctness without involving their cost.

\begin{axiom}\label{ax:cost-algo}
  The type of costs $\costty$ is algorithmic.
\end{axiom}

\begin{samepage}
\begin{theorem}\label{thm:charge}
  If $\beh$ holds, then for all $c : \costty$, we have $\charge{c}{e} = e$.
\end{theorem}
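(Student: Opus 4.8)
The plan is to show that, under the assumption of $\beh$, the cost algebra $\costty$ collapses to a point, so that every cost equals $0$ and the claim reduces to the monoid law \eqref{eq:zero}.

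First I would fix $\b : \beh$. Since $\beh$ is a proposition, once it is inhabited it is in fact contractible; hence for \emph{any} type $X$ the behavioral modality $\Op X = \beh \to X$ is equivalent to $X$, by evaluating a function at the center of contraction. Instantiating at $X \isdef \costty$ gives $\Op \costty \iso \costty$.

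Next, by \cref{axiom:cost-algo} the type $\costty$ is algorithmic, so \cref{lem:algorithmic}(1) tells us that $\Op \costty$ is contractible. Composing with the equivalence from the previous step, $\costty$ is itself contractible; writing $c_0$ for its center of contraction, every $c : \costty$ satisfies $c = c_0$, and also $0 = c_0$, so $c = 0$.

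Finally, applying the function $\mstep{-}{e} : \costty \to A$ to the path $c = 0$ gives $\mstep{c}{e} = \mstep{0}{e}$, and $\mstep{0}{e} = e$ by \eqref{eq:zero}, which finishes the proof. The only step that requires any thought is the first one — recognizing that assuming $\beh$ trivializes the behavioral modality — and even that is immediate once one recalls that $\beh$ is a proposition; the remainder is a routine chain of the stated lemmas and axioms.
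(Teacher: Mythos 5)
Your proof is correct and follows essentially the same route as the paper's: show that $\costty$ becomes contractible under $\beh$, hence $c = 0$, and conclude by the monoid unit law \eqref{eq:zero}. The only difference is that you spell out the intermediate step — that $\beh$ inhabited makes $\Op \costty \iso \costty$, which combined with \cref{lem:algorithmic}(1) gives contractibility of $\costty$ — whereas the paper compresses this into the one-line assertion that an algorithmic type is contractible under the phase.
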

\begin{proof}
  Suppose $\beh$ holds, and let $c : \costty$ be arbitrary.
  Then, because $\costty$ is algorithmic, it is contractible.
  Therefore, all elements of $\costty$ are equal, so $c =_\costty 0$.
  The result follows by \cref{eq:zero}.
\end{proof}
\end{samepage}

As in Calf~\citep{niu-sterling-grodin-harper>2022}, we choose to define $\costty \isdef \Cl_{\beh} \nat$ to describe numeric costs.

\subsection{Behavioral Specifications as Algorithmic Types}

In \cref{sec:interface} we used the abstract phase to describe abstract data types, thereby restricting code in the abstract phase to match a given specification.
Such descriptions formed closed-modal types, which admit noninterference and modularity principles: by containing a unique in-phase representative, closed-modal types ensure that client code remains invariant under substitution of library code when in the phase.
In the presence of cost this line of reasoning directly adapts to the behavioral phase, where an algorithmic type contains a unique behavior and therefore client behavior is unaffected by the choice of library implementation.

\begin{example}\label{ex:sort}
  In \cref{ex:log-linear-sort} we gave a concrete type that classifies all sorting algorithms on red-black trees that charge log-linear cost.
  To describe \emph{all} sorting algorithms on red-black trees, we may give the behavioral specification type (\cref{def:specification-type})
  \[ \Name{Sort} \isdef \SPEC[\beh]{\U{\rbtty{\nat} \pto \F{\rbtty{\nat}}}}{\lambda l.\ \ret{\ImplSpec{sort}~l}} \]
  that imposes the weaker requirement on inhabitants to sort lists, making no mention of cost by giving a specification in the behavioral phase.
  This type is formally algorithmic, justified because it describes an algorithmic problem---sorting a list (represented as a red-black tree)---and its inhabitants are precisely the algorithms that meet this specification.
  For example, both insertion sort and merge sort inhabit this type, as sorting algorithms.%
  \footnote{Sorting algorithms that use non-cost effects in a benign manner, such as nondeterministic and randomized quicksort~\citep{grodin-niu-sterling-harper>2024}, also inhabit the type $\Name{Sort}$.}
  Although this type is algorithmic, it is not concrete: whereas the concrete type of \cref{ex:log-linear-sort} gave an integrated cost-and-behavior specification, this type provides clients with the weaker guarantee imposed by a behavior-only specification.
\end{example}

\begin{example}
  Abstract cost specifications extend to higher-order functions when the input function is assumed to be abstractly sufficiently restricted, following \citet{grodin-niu-sterling-harper>2024}.
  For example, the ``reduce'' algorithm on red-black trees takes linear time as long as the combining function is assumed to be constant time.
  (If the combining function has more complex cost characteristics, it is difficult to provide a simple bound, as the precise layout of the red-black tree may affect the cost.)
  This nuanced property can be expressed through careful use of the abstract and behavioral phases.
  Say that a function $f : \U{E \pto E \pto \F{E}}$ is total when, behaviorally, it always returns a result:
  \[ \Name{Total}(f) \isdef \sum_{f' : E \to E \to E} \Op_{\beh} (f = (\lambda~e_1~e_2.\ \ret{f'~e_1~e_2})) \]
  Refining this definition, when $f$ is total, it can be further restricted to have constant time:
  \begin{align*}
    &\Name{ConstantTime} : \Name{Total}(f) \to \tpv \\
    &\Name{ConstantTime}(f', \b) \isdef f = (\lambda~e_1~e_2.\ \charge{1}{\ret{f'~e_1~e_2}})
  \end{align*}
  If the reduce algorithm is forced to take in a proof that its input function abstractly takes constant time, then it can guarantee linear cost itself.
  Letting the type of reduce be
  \begin{align*}
    \Name{ReduceT} \isdef
      & ~\kw{U}((f : \U{E \pto E \pto \F{E}}) \pto (e : E) \\
      & \quad \pto \Op_{\abs}\left(\sum_{(f', t) : \Name{Total}(f)} \Name{ConstantTime}(f', t) \times \Name{IsMonoid}(e, f')   \right) \\
      & \quad \pto \rbtty{E} \pto \F{E}),
  \end{align*}
  the corresponding linear-time abstract specification is
  \[ \SPEC[\abs]{\Name{ReduceT}}{\lambda f. \lambda e. \lambda p. \lambda l. \ \charge{\len{l}}{\ret{\ImplSpec{fold}~f'~e~l}}}, \]
  where $\ImplSpec{fold} : (E \to E \to E) \to E \to \listty{E} \to E$ is the specification implemented as ``fold'' on lists and $f'$ is the projection from the abstract assumption $p$.
  This cost assumption places the burden of proof on the client side: when applying a function of this type, a client must prove that their given $f$ has constant cost (and the inputs form a monoid, as usual).
  Note that in \cref{mod:false}, the assumption of $\abs$ is vacuous, and therefore the proof argument is trivial; thus, in the concrete semantics, the true underlying code is extracted, with no expectations about the cost of $f$.
\end{example}

The development of \cref{sec:interface} adapts from the abstract phase to the behavioral phase similarly: behavioral specification types and universal properties underlie algorithmic types, and maps out of an algorithmic type are behaviorally constant.

\subsection{Upper Bounds via the Sealing Effect}

For a true notion of cost-aware modularity, programs matching a cost-aware specification ought only to have cost \emph{upper-bounded} by the signature.
Substituting an implementation that happens to be less expensive than the specification purports should not disrupt the soundness of downstream abstract reasoning; the true cost of the client may decrease, but this will not violate the upper bound.
Moreover, tight bounds are often inexpressible given only the information available abstractly: commonly, the exact cost of an implementation depends on concrete details that are hidden abstractly, and only an upper bound for the true cost can be computed given only abstract information.
For example, as is typical in tree algorithms, the precise cost of the red-black tree append algorithm depends on the coloring of the tree, which is abstractly inaccessible (\cref{rem:phase-restrictions}); however, an upper bound may be provided in terms of the number of nodes in the trees, which is computable from the abstract representation~\cite[Theorem 3.5]{li-grodin-harper>2023}.
To account for upper bounds, we pass to Decalf---a refinement of Calf with program inequality---and add an effect that allows the cost of programs to be weakened to an upper bound in the abstract phase.

\subsubsection{Program Inequality in Decalf}

Building on Decalf~\citep{grodin-niu-sterling-harper>2024}, we include a judgmental notion of \emph{inequality} between programs.%
\footnote{The semantics of Decalf may be straightforwardly adapted to the univalent setting~\citep{sterling>decalf-semantics}.}
Just as program equality compares the cost and behavior of programs, so too does inequality: for programs $a,a' : A$, it is the case that $a \le_A a'$ when the cost of $a$ is bounded by the cost of $a'$ and both $a$ and $a'$ have equal behavior.
Inequality in Decalf is similar to equality: it is reflexive and transitive (but not symmetric); it is pointwise on functions;
and all functions are monotone, by analogy with congruence of equality.
By monotonicity, inequality at the cost type $\costty$ lifts to inequality of cost-annotated programs:
\[ c \le_\costty c' \to \charge{c}{a} \le_A \charge{c'}{a} \]
Furthermore, inequality and equality coincide in the behavioral phase: because inequality verifies both cost and behavior simultaneously and cost is erased in the behavioral phase (\cref{thm:charge}), the only remaining force of inequality in the behavioral phase is equality.
\begin{axiom}\label{ax:decalf}
  If $\beh$ holds, then $a \le a'$ implies $a = a'$.
\end{axiom}
As before, we use natural numbers as the cost model, now imbued with the usual ordering $\costty \isdef \{0 < 1 < 2 < \cdots\}$; this type is algorithmic (to satisfy \cref{ax:cost-algo}) by \cref{ax:decalf}.

\begin{example}\label{ex:quadratic-sort}
  The insertion sort algorithm always makes at most $n^2$ comparisons, but the exact number depends on the input list.
  Refining \cref{ex:log-linear-sort}, the type of sorting algorithms (functions coherent with $\ImplSpec{sort}$) that have cost bounded by $n^2$ is
  \[ \sum_{f : \U{\rbtty{\nat} \pto \F{\rbtty{\nat}}}} \Op_{\abs}(f \le \lambda l.\ \charge{\len{l}^2}{\ret{\ImplSpec{sort}~l}}), \]
  classifying algorithms such as insertion sort~\citep[Example 3.4]{grodin-niu-sterling-harper>2024}.
  By \cref{ax:beh-abs,thm:charge,ax:decalf}, this type is behaviorally equivalent to the type $\Name{Sort}$ of \cref{ex:sort}.
  Note that while this type is algorithmic, it is \emph{not} concrete, as $f$ is not uniquely determined.
\end{example}

\subsubsection{Cost Sealing as an Effect}

To allow the cost of a program to be weakened in the abstract phase, we introduce a new effect, \emph{sealing}, that pretends that a program $a$ is a specification $\BEH{a}$ in the abstract phase, as long as $a$ is (abstractly) upper-bounded by $\BEH{a}$.
\begin{equation*}
  \inferrule
    {
      \Gamma \vdash a : A \\
      \Gamma, \b : \abs \vdash \BEH{a} : A \\
      \Gamma, \b : \abs \vdash a \le_A \BEH{a}
    }
    {\Gamma \vdash \seal[A]{a}{\BEH{a}}:A}
\end{equation*}
We omit the evidence that $a \le_A \BEH{a}$ in the sealing syntax for readability, but it does constitute a proof obligation.
This effect is the cost-aware analogue of sealing for modules, which also hides the information exported by a module under a less expressive signature~\citep{dreyer-crary-harper>2003}.

In the abstract phase sealing is erased, retaining only the specification; and sealing at a concrete type $A$%
\footnote{We say that a computation type $A$ is concrete when $\U{A}$ is concrete.}
is otiose, because the data is uniquely defined in the abstract phase.
\begin{mathpar}
  \inferrule
    {
      \Gamma \vdash \abs
    }
    {\Gamma \vdash \seal{a}{\BEH{a}} = \BEH{a}}

  \inferrule
    {
      \Gamma \vdash A~\text{concrete} \\
    }
    {\Gamma \vdash \seal[A]{a}{\BEH{a}} = a}
\end{mathpar}
Also present are unit and multiplication laws, analogous to \cref{eq:zero,eq:plus}.
Sealing $a$ at itself reflexively has no effect, as it reveals nothing new;
and sealing a program $a$ at a specification $\BEH{a}$, and then sealing again at a further specific $\BEH{a}'$, is equivalent to sealing $a$ at $\BEH{a}'$ originally, by transitivity.
\begin{mathpar}
  \inferrule
    {
      \Gamma \vdash a : A
    }
    {\Gamma \vdash \seal{a}{a} = a}

  \inferrule
    {
      \Gamma, \b : \abs \vdash a \le_A \BEH{a} \\
      \Gamma, \b : \abs \vdash \BEH{a} \le_A \BEH{a}'
    }
    {\Gamma \vdash \seal{\seal{a}{\BEH{a}}}{\BEH{a}'} = \seal{a}{\BEH{a}'}}
\end{mathpar}
Finally, sealing commutes with other effects: in particular, it commutes with the cost effect.
\begin{equation*}
  \seal{\charge{c}{a}}{\charge{c}{\BEH{a}}} = \charge{c}{\seal{a}{\BEH{a}}}
\end{equation*}

\subsubsection{Monadic Semantics of Sealing}

To provide a semantics for sealing, we continue to interpret computation types as algebras over a monad on $\tpv$.
Now, in addition to cost (using the writer monad, as in \cref{sec:cost:cost}), we must ensure that our monad describes sealing.
The monad for sealing, in fact, comes from an adaptation to the glue type: although the glue type maintained concrete and abstract data that cohere exactly up to a phased abstraction function, we merely wish to ensure a lax notion of coherence, where the concrete data is upper-bounded by the abstract data.
Accordingly, we define the \emph{lax glue type} as the following inequality-based restriction on the product $\ALGO{X} \times \BEH{X}$, also known as a comma object \citep{niefield>1981}:
\begin{center}
  \begin{minipage}{0.6\linewidth}
  \begin{align*}
    & \mathsf{Glue}^\le : \left( \sum_{\ALGO{X} : \ALGO{\tpv}} \sum_{\BEH{X} : \BEH{\tpv}} (\ALGO{X} \to \Cl_{\abs} \BEH{X}) \right) \to \tpv \\
    & \Glue*{\BEH{X}}{\ALGO{X}}{\chi} \isdef \sum_{(\ALGO{x}, \BEH{x}) : \ALGO{X} \times \BEH{X}} \chi \ALGO{x} \le_{\Cl_{\abs} \BEH{X}} \etaCl_{\BEH{X}} \BEH{x}
  \end{align*}
  \end{minipage}%
  \begin{minipage}{0.4\linewidth}
    \[\begin{tikzcd} \Glue*{\BEH{X}}{\ALGO{X}}{\chi} & \ALGO{X} \\
      \BEH{X} & {\Cl_{\abs} \BEH{X}}
      \arrow[""{name=lhs}, "\Label{proj}_2"', from=1-1, to=2-1]
      \arrow["\Label{proj}_1", from=1-1, to=1-2]
      \arrow[""{name=rhs}, "{\chi}", from=1-2, to=2-2]
      \arrow["{\etaCl_{\BEH{X}}}"', from=2-1, to=2-2]
      \arrow["\lrcorner"{anchor=center, pos=0.125}, draw=none, from=1-1, to=2-2]
      \arrow["\ge"{description}, draw=none, from=rhs, to=lhs]
    \end{tikzcd}\]
  \end{minipage}%
\end{center}
Combining this construction with fracture, we derive the \emph{sealing monad}.
\begin{definition}[Sealing Monad]
  Let the sealing monad $\SealM$ be defined as lax gluing after fracture:
  \begin{align*}
    &\SealM : \tpv \to \tpv \\
    &\SealM X \isdef \mathsf{Glue}^\le(\Frac{X}) = \sum_{(\ALGO{x}, \BEH{x}) : \Cl_{\abs} X \times \Op_{\abs} X} (\Cl_{\abs} \etaOp) \ALGO{x} \le_{\Cl_{\abs} \Op_{\abs} X} \etaCl \BEH{x}
  \end{align*}
  The unit (return) and multiplication (join) for the monad are defined as follows, implicitly using reflexivity and transitivity of $\le$, respectively:
\begin{align*}
  &\eta^{\SealM}_X : X \to \SealM X               &&\mu^{\SealM}_X : \SealM (\SealM X) \to \SealM X \\
  &\eta^{\SealM}_X~x \isdef (\etaCl x, \etaOp x)  &&\mu^{\SealM}_X~(\ALGO{s}, \BEH{s}) \isdef (\mu^{\bullet}_X(\Cl_{\abs}(\Label{proj}_1)\ALGO{s}), \mu^{\circ}_X(\Op_{\abs}(\Label{proj}_2)\BEH{s})).
\end{align*}
\end{definition}
In \cref{thm:fracture} it is shown that $\mathsf{Glue}(\Frac{X})$ is equivalent to $X$.
Although it is \emph{not} the case that $\SealM X$ is equivalent to $X$ in general, this does hold abstractly; since sealing only serves to relate abstract specifications to concrete implementations, which are erased in the abstract phase, $\eta^{\SealM}$ is abstractly an equivalence, and $\Op_{\abs} (\SealM X) = \Op_{\abs} X$.
However, while \cref{thm:fracture} says that $\Cl_{\abs} (\mathsf{Glue}(\Frac{X})) = \Cl_{\abs} X$, it is \emph{not} the case that $\Cl_{\abs}(\SealM X) = \Cl_{\abs} X$, as the upper-bounding specification cannot be reconstructed only from only the concrete implementation.

\begin{remark}
  In \cref{mod:true}, for the reasons described above, the sealing monad is interpreted as the identity monad.
  In \cref{mod:false}, the sealing monad is \emph{also interpreted as the identity monad}, even though this is not provable internally; this means that when extracting the underlying concrete code, sealing is ignored.
  In \cref{mod:psh}, the sealing monad has a nontrivial interpretation:
  \[ \SealM \Presheaf[\alpha]{\TOP{X}}{\ABS{X}} = \Presheaf[\Label{proj}_2]{\{(\TOP{x}, \ABS{x}) \in \TOP{X} \times \ABS{X} \mid \alpha(\TOP{x}) \le \ABS{x}\}}{\ABS{X}}. \]
  The bottom component matches \cref{mod:true}, as usual, and the top component describes pairs consisting of an implementation and an upper-bounding specification.
  The abstraction function projects out the upper-bounding specification, ignoring the true program.
\end{remark}

Applying the writer monad transformer to incorporate cost, algebras for the monad $\SealM(\costty \times (-))$ support both the cost and sealing effect.

\subsubsection{Sealing to Meet Cost Interfaces}

Using the sealing effect, we can artificially inflate the abstract cost of programs so that they meet their given cost specifications.

\begin{example}
  Using sealing, the type of at-most-$n^2$-cost sorting algorithms may be formulated as a concrete type, admitting reasoning by noninterference about client code (unlike \cref{ex:quadratic-sort}).
  Define the abstract specification type of at-most-$n^2$-cost sorting algorithms on red-black trees to be
  \[ \SPEC[\abs]{\U{\rbtty{\nat} \pto \F{\rbtty{\nat}}}}{\lambda l.\ \charge{\len{l}^2}{\ret{\ImplSpec{sort}}}}. \]
  Given $\Impl{isort} : \U{\listty{\nat} \pto \F{\listty{\nat}}}$, insertion sort may be sealed with an $n^2$-cost interface as
  \[ \seal{\Impl{isort}}{\lambda l.\ \charge{\len{l}^2}{\ret{\ImplSpec{sort}}}} \]
  to arise as an inhabitant of this type.
  This use of the seal effect requires a proof that $\Impl{isort}$ is abstractly upper-bounded by the $n^2$-cost specification~\citep[Example 3.4]{grodin-niu-sterling-harper>2024}.
\end{example}

Analytically, a proof that an operation $f$ on a data structure meets the desired upper bound is an inequality as below---that is, a \emph{lax commutative square}---showing that the concrete implementation $\TOP{f}$ is upper-bounded by the specification $\ABS{f}$ relative to the abstraction functions $\alpha$ and $\beta$:
\begin{center}
  \begin{minipage}{0.6\linewidth}
    \[ \left( \bindex{\TOP{f}~\TOP{x}}{\TOP{y}} \ret{\beta~\TOP{y}} \right) \le \ABS{f}~(\alpha~\TOP{x}) \]
  \end{minipage}%
  \begin{minipage}{0.4\linewidth}
    \[\begin{tikzcd}
      {\TOP{X}} & {\U{\F{Y_{\top}}}} \\
      {\ABS{X}} & {\U{\F{Y_{\abs}}}}
      \arrow["{\TOP{f}}", from=1-1, to=1-2]
      \arrow[""{name=0, anchor=center, inner sep=0}, "\alpha"', from=1-1, to=2-1]
      \arrow[""{name=1, anchor=center, inner sep=0}, "{\U{\F{\beta}}}", from=1-2, to=2-2]
      \arrow["{\ABS{f}}"', from=2-1, to=2-2]
      \arrow["\ge"{description}, draw=none, from=1, to=0]
    \end{tikzcd}\]
  \end{minipage}%
\end{center}
Letting $X \isdef \Glue{\Op_{\abs}{\ABS{X}}}{\Cl_{\abs}{\TOP{X}}}{\Cl_{\abs}(\etaOp \circ \alpha)}$ and $Y \isdef \Glue{\Op_{\abs}{Y_{\abs}}}{\Cl_{\abs}{Y_{\top}}}{\Cl_{\abs}(\etaOp \circ \beta)}$,
we will assemble the above data into a phased function $f : \U{X \pto \F{Y}}$ using the sealing effect.
\begin{lemma}\label{lem:lex}
  The following equivalence holds, since $\Op_{\abs}$ and $\Cl_{\abs}$ are lex \citep[\S 3]{rijke-shulman-spitters>2020}:
  \[ X = \sum_{\BEH{x} : \Op_{\abs} \ABS{X}} \Cl_{\abs} \left(\sum_{\TOP{x} : \TOP{X}} \Op_{\abs} (\alpha~\TOP{x} = \BEH{x}~\b)\right). \]
\end{lemma}
Let $\Impl{spec} \isdef \bindex{\ABS{f}~(\BEH{x}~\b)}{\ABS{y}} \ret{\starCl~\b, \etaOp\ABS{y}}$ run the specification $\ABS{f}$ in the abstract phase (\ie, in context $\b : \abs$).
Up to \cref{lem:lex}, the definition of $f$ is as follows:
\iblock{
  \mrow{f : \U{X \pto \F{Y}}}
  \mhang{f~(\BEH{x}, \etaCl (\TOP{x}, h)) \isdef \seal{\Impl{impl}}{\Impl{spec}}}{
    \mrow{\kw{where}~\Impl{impl} \isdef \bindex{\TOP{f}~\TOP{x}}{\TOP{y}} \ret{\etaCl\TOP{y}, \etaOp(\beta~\TOP{y})}}
  }
  \mrow{f~(\BEH{x}, \starCl~\b) \isdef \Impl{spec}}
}\noindent
This code combines the information of the functions $\TOP{f}$ and $\ABS{f}$ and the above lax commutative square into a single function $f$ that concretely runs $\TOP{f}$ and abstractly runs $\ABS{f}$.
\begin{enumerate}
  \item
    In the $\etaCl (\TOP{x}, h)$ case the implementation $\TOP{f}$ is sealed against the specification $\ABS{f}$, up to modalities.
    The proof $h : \Op_{\abs} (\alpha~\TOP{x} = \BEH{x}~\b)$ is used in the proof that abstractly, $\Impl{impl} \le \Impl{spec}$:
    \begin{align*}
      &\bindex{\TOP{f}~\TOP{x}}{\TOP{y}} \ret{\etaCl\TOP{y}, \etaOp(\beta~\TOP{y})} \\
      &= \bindex{\TOP{f}~\TOP{x}}{\TOP{y}} \ret{\starCl~\b, \etaOp(\beta~\TOP{y})} \tag*{(abstract phase)} \\
      &= \bindex{(\bindex{\TOP{f}~\TOP{x}}{\TOP{y}} \ret{\beta~\TOP{y}})}{\ABS{y}} \tag*{(monad assoc.)} \ret{\starCl~\b, \etaOp\ABS{y}} \\
      &\le \bindex{\ABS{f}~(\alpha~\TOP{x})}{\ABS{y}} \ret{\starCl~\b, \etaOp\ABS{y}} \tag*{(lax commutative square)} \\
      &= \bindex{\ABS{f}~(\BEH{x}~\b)}{\ABS{y}} \ret{\starCl~\b, \etaOp\ABS{y}} \tag*{($h$)}
    \end{align*}
  \item
    The $\starCl~\b$ case occurs in the phase, so it produces the same $\Impl{spec}$.
    Since $\seal{a}{\BEH{a}} = \BEH{a}$ in the abstract phase, these two cases cohere, as is required by the concrete modality.
\end{enumerate}

\begin{example}
  Let us instantiate this pattern to the batched queue construction of \cref{sec:abstraction:gluing:queue}, where $\TOP{X} = \listty{E} \times \listty{E}$, $\ABS{X} = \listty{E}$, and $\alpha = \Impl{revAppend}$ as before.
  The batched dequeue implementation $\TOP{\Impl{dequeue}}$ takes either constant time (in the common case) or linear time (when a ``batch'' of elements is moved from the inbox to the outbox).
  Thus, the abstract program
  \begin{align*}
    &\ABS{\Impl{dequeue}} : \U{\listty{E} \pto \F{E \times \listty{E}}} \\
    &\ABS{\Impl{dequeue}}~l \isdef \charge{\len{l}}{\ret{\Impl{uncons}~l}}
  \end{align*}
  can be used as an \emph{upper bound} specification on the batched dequeue implementation, even though it will rarely coincide with the true cost.
  Letting $X$ be the glued type of batched queues, the operation
  \[ \Impl{dequeue} : \U{X \pto \F{E \times X}} \]
  may be defined by instantiating the above template with the following lax commutative square:
\begin{center}
  \begin{minipage}{0.67\linewidth}
    \[ \left( \bindex{\TOP{\Impl{dequeue}}~\TOP{x}}{e,\TOP{x}'} \ret{e, \alpha~\TOP{x}'} \right) \le \ABS{\Impl{dequeue}}~(\alpha~\TOP{x}) \]
  \end{minipage}%
  \begin{minipage}{0.33\linewidth}
    \[\begin{tikzcd}
      {\TOP{X}} & {\U{\F{E \times \TOP{X}}}} \\
      {\ABS{X}} & {\U{\F{E \times \ABS{X}}}}
      \arrow["{{\Impl{dequeue}}_{\top}}", from=1-1, to=1-2]
      \arrow[""{name=0, anchor=center, inner sep=0}, "\alpha"', from=1-1, to=2-1]
      \arrow[""{name=1, anchor=center, inner sep=0}, "{\U{\F{E \times \alpha}}}", from=1-2, to=2-2]
      \arrow["{\Impl{dequeue}_{\abs}}"', from=2-1, to=2-2]
      \arrow["\ge"{description}, draw=none, from=1, to=0]
    \end{tikzcd}\]
  \end{minipage}%
\end{center}
  This is analogous to the squares of \cref{sec:abstraction:gluing:queue}, but now accounting for both cost and behavior.
\end{example}

The sealing effect also interoperates smoothly with the abstract phased quotients of \cref{sec:abstraction:quotient}: branches of code may use sealing against an upper bound in order to respect the abstract quotient.
\section{Conclusion}\label{sec:conclusion}

We offer a unifying perspective on modular verification of cost and behavior in dependent type theory.
The synthetic phase distinction between concrete and abstract aspects, later extended to algorithmic and behavioral aspects, internalizes the separation between implementation and specification.
At the heart of this work lies the \emph{fracture and gluing} theorem (\cref{thm:fracture}), which---underlying the slogan \emph{abstraction functions as types}---renders every type as having a concrete component, an abstract component, and a phased abstraction function relating the two.
In \cref{sec:abstraction} we present two techniques for using the abstract phase to conceal implementation details while exposing the abstract behavior to clients: the glue type $\Glue{\BEH{X}}{\ALGO{X}}{\chi}$, which encapsulates an abstraction function within the type itself, and phased quotient types, which identify concrete distinctions between representations in the abstract phase.

Critically, the \emph{noninterference} property of the phase distinction (\cref{thm:noninterference}) ensures that concrete data cannot influence the abstract view, thereby enabling modularity~(\cref{cor:modularity}).
This form of modularity allows for the compositional verification of algorithms and data structures via phased interfaces, as shown in \cref{sec:interface}.
These interfaces permit downstream code to be verified against an abstract specification using any convenient-to-reason-about library implementation.
The ability to verify against a specification abstractly while preserving freedom in concrete implementation is crucial for modular verification.

In \cref{sec:cost} we extend this approach to account for cost verification, generalizing the abstract phase into a behavioral phase and the dependent type theory to Calf, a cost-aware, effectful dependent type theory.
Under the abstract phase, programs are treated as cost-and-behavior specifications; under the behavioral phase, which redacts cost information, programs are treated as their functional behavior.
These phased interfaces enable downstream verification of both cost and functional correctness without requiring visibility into implementation details.
Noninterference guarantees extend naturally to these phases, ensuring that modular reasoning principles apply uniformly across cost and behavior.
Finally, recognizing that a cost specification may be an over-approximation of the actual implementation, we introduce a sealing effect in Decalf, a directed variant of Calf.
This effect relaxes concrete programs to cost upper bounds in the abstract phase, which allows abstract specifications to classify not only the programs that exactly match the cost, but more generally programs that are upper-bounded by the specification; this ability is crucial for modularity, since the true cost of a concrete implementation often depends on publicly-inaccessible details.
The semantics of the sealing effect is given by a monadic structure built atop a specialized glue type that accommodates inequality.

Altogether, synthetic phase distinctions provide a principled and expressive language that reconciles abstraction, efficiency, and verification in dependent type theory.

\subsection{Related Work}

We now summarize the relationship between our development and prior work.

\subsubsection{Synthetic Phase Distinctions}
This work is set in the world of synthetic phase distinctions pioneered by \citet{sterling-harper>2021,sterling-harper>2022}.
The techniques used are built upon the framework for modalities developed by \citet{rijke-shulman-spitters>2020}, making particular use of the open and closed modalities associated with the propositions $\abs$ and $\beh$ and the corresponding fracture and gluing.

Recent work by \citet{gratzer-sterling-angiuli-coquand-birkedal>2022} on abstraction in dependent type theory also makes use of phases to selectively reveal the implementation of definitions; within the phase for a particular definition, the corresponding code is unfolded. The role of phase there is to hide the implementation details of a definition; we achieve a similar effect by describing concrete data as closed-modal.

As regards the connections to cost analysis, the principal references are \citet{niu-sterling-grodin-harper>2022} and \citet{grodin-niu-sterling-harper>2024} on which \cref{sec:cost} is based.
Therein are provided a comprehensive comparison to related work on formalized cost analysis, all of which applies as well to the present setting.

\subsubsection{Realignment and the Strict Glue Type}
In this work we consider the glue type $\Glue{\BEH{X}}{\ALGO{X}}{\chi}$, which is provably equivalent to $\BEH{X}$ in the abstract phase by \cref{thm:fracture}.
This is important for an implementation to match its specification, as shown in \cref{sec:interface}.
A similar result can be achieved by considering the \emph{realignment/strictification axiom} \cite{birkedal-bizjak-clouston-grathwohl-spitters-vezzosi>2016,orton-pitts>2016,sterling>2022-logical-relations,sterling>thesis} and an aligned $\Sigma$ type called the \emph{strict glue type} \cite{sterling-harper>2022,yang>thesis,li-harper>2025} that is judgmentally equal to its abstract component under the phase.

\subsubsection{Ghost Code}
Our use of phases broadly parallels the technique of \emph{ghost code}, where functional, specification-level ghost code is maintained alongside (typically more efficient) ``regular'' code.
Prior accounts of ghost code have described noninterference of ghost code with regular code, erasing ghost code~\citep{owicki-gries>1976,filliatre-gondelman-paskevich>2016} or type refinements~\citep[\S8.2]{sterling>thesis} to extract the efficient regular code.
Although our presentation supports the extraction of concrete code as an external notion, achieved in \cref{mod:false} by interpreting $\abs$ as the false proposition, the directionality of our phase is dual: internally, we allow erasure of regular (concrete) code, leaving behind only abstract specification.
This ensures our opposite variety of noninterference, of regular code with ghost code (\cref{thm:noninterference}), which appears here as the essence of modular verification:
although ghost code and type refinements must not affect the underlying program, in modular development private/concrete data must not appear in public/abstract specifications.

\subsubsection{Representation Independence and Univalence}
\citet{angiuli-cavallo-mortberg-zeuner>2021} tell a similar story for abstract data types and representation types in a univalent setting.
For example, in their presentation of batched queues, the pair-of-lists type is quotiented by equivalence under $\Impl{revAppend}$, leading to a type equivalent to the $\listty{\nat}$~\citep[\S 4.2]{angiuli-cavallo-mortberg-zeuner>2021}.
We adapt their work to the phased setting in \cref{sec:abstraction:quotient}, activating the quotient only in the abstract phase.
That way, the true code (without a quotient) may be recovered in \cref{mod:false} by interpreting $\abs$ as the false proposition, but the verification advantages developed by \citeauthor{angiuli-cavallo-mortberg-zeuner>2021} are available internally in the abstract phase.

\subsubsection{Verification of Data Structures via Abstraction Functions}
This work is far from the first to verify data structures using abstraction functions; for example, \citet{nipkow>2025} has developed an extensive suite of data structures in Isabelle/HOL with verifications based on abstraction functions.
\citet{nipkow>2016} also shows how to verify concrete tree-based data structures using the in-order traversal abstraction function as we have in \cref{sec:abstraction:gluing}.
The abstract phase elevates abstraction functions to a distinguished status, existing within types themselves, and it provides the ability to uniformly apply all the abstraction functions for abstract, interface-level reasoning.

Moreover, using different background colors, \citet{nipkow>2025} distinguishes between ``functional programs'' (concrete) and ``auxiliary definitions'' (abstract).
Our development with the abstract phase can be viewed as a formalization of this distinction.

\subsubsection{Algebraic Specification}
In programming languages for refined modular development, such as Extended ML~\citep{kahrs-sannella-tarlecki>1997}, equational laws may be specified within an interface.
However, as discussed in \cref{sec:interface:adt:properties}, such properties will rarely hold for efficient implementations of the interface.
\citet[\S 8]{sannella-tarlecki>2012} address this issue using a distinguished class of ``observable types''---basic types at which equations are meant to apply---analogous to of the abstract types considered in this work.
For example, by stating the monoid equations under the abstract modality, we reconstruct this technique synthetically, as abstract specifications are thought of as publicly observable.
Moreover, the behavioral phase generalizes the approach to the setting of cost analysis, where many equations even at observable types only hold when cost is suppressed.

\subsubsection{Existential Types}

In this work, incorporation of abstraction functions into types is the mechanism for enforcing modularity; a classic alternative in the non-dependent setting is the use of existential types~\citep{mitchell-plotkin>1988}.
Whereas the extension of an element of an existential type is determined by the fixed set of operations provided, abstraction functions commit to an abstract representation prior to defining \emph{any} operations.
This commitment facilitates modularity: rather than hiding data in an opaque existential package, for fear that a client might abuse private implementation details, abstraction functions choose up-front the details intended to be made public, and downstream programs are free to write functions on the private type, so long as they cohere with the choice of abstract model.
This aligns with the proposal by \citet{reynolds>1985} that abstract types should not be confined to a ``closed scope,'' as they are in the case of existential types.

\subsubsection{Views}
Our development of phased quotients in \cref{sec:abstraction:quotient} captures a similar perspective as \emph{views} \cite{wadler>1987}, where data structures can be seen as a representation of a simpler type (the \emph{view}) with improved efficiency.
Functions can be given to convert back and forth between the data structure and the view, but they will not na\"ively form an equivalence:
\begin{myquote}{wadler>1987}
  The correctness of the view depends on the equivalence between the [concrete representations]; otherwise, the\dots functions would not be inverses.
\end{myquote}
The abstract phase provides a synthetic means for distinguishing between general equivalence, for implementation-level verification, and abstract equivalence, for client-facing correctness.

\subsection{Future Work}

The abstract phase forces the programmer to carefully redact implementations to abstractly reveal only the intended information to ensure modularity.
This process distills the essence of algorithms and data structures, isolating the choices being made for efficiency.
As a next step, we intend to verify additional algorithms and data structures in this style to validate this approach.
Furthermore, this work considers only worst-case cost interfaces; in ongoing work, we are considering the incorporation of more advanced tools of complexity analysis, including amortized \citep{grodin-harper>2024}, asymptotic, and expected cost specifications.
 
\section*{Data Availability Statement}
The queue examples presented in \cref{sec:abstraction,sec:interface} and a corresponding library of phased definitions and lemmas are mechanized in the Cubical Agda proof assistant~\citep{norell>2009,vezzosi-mortberg-abel>2019}, available as an artifact~\cite{grodin-li-harper>2025-afat-agda}.

\begin{acks}
  The authors thank Yue Niu and Jonathan Sterling for fruitful adjacent collaboration that broadly inspired this research;
  the anonymous reviewers for their thoughtful comments;
  and Na\"{i}m Camille Favier, Am\'{e}lia Liao, and Tesla Zhang for their advice pertaining to the Cubical Agda formalization.

  This material is based upon work supported by the \grantsponsor{AFOSR}{United States Air Force Office of Scientific Research}{https://www.afrl.af.mil/AFOSR/} under grant numbers \grantnum{AFOSR}{FA9550-21-0009} and \grantnum{AFOSR}{FA9550-23-1-0434} (Tristan Nguyen, program manager). Any opinions, findings and conclusions or recommendations expressed in this material are those of the authors and do not necessarily reflect the views of the AFOSR.
\end{acks}

\bibliographystyle{ACM-Reference-Format}
\bibliography{main,manual}

\end{document}